\documentclass{article}

\usepackage[preprint]{neurips_2026}

\usepackage[utf8]{inputenc} % allow utf-8 input
\usepackage[T1]{fontenc}    % use 8-bit T1 fonts
\usepackage{hyperref}       % hyperlinks
\usepackage{url}            % simple URL typesetting
\usepackage{booktabs}       % professional-quality tables
\usepackage{amsfonts}       % blackboard math symbols
\usepackage{nicefrac}       % compact symbols for 1/2, etc.
\usepackage{microtype}      % microtypography
\usepackage{xcolor}         % colors
\usepackage{algorithm}
\usepackage{algpseudocode}
\usepackage{lipsum}
\usepackage{enumitem}
\usepackage{amsmath}
\usepackage{amsthm}
\usepackage{graphicx}
\usepackage{subcaption}
\usepackage{wrapfig}
\usepackage{multirow}

\newtheorem{theorem}{Theorem}[section]
\newtheorem{lemma}{Lemma}[section]
\newtheorem{assumption}{Assumption}[section]
\newtheorem{definition}{Definition}[section]

\newcommand{\cX}{\mathcal{X}}
\newcommand{\cA}{\mathcal{A}}
\newcommand{\cT}{\mathcal{T}}
\newcommand{\norm}[1]{\left\|#1\right\|}
\newcommand{\abs}[1]{\left|#1\right|}

\title{Towards Model-Free Learning in Dynamic Population Games: An Application to Karma Economies}

\author{%
  Matteo Cederle\\
  Department of Information Engineering\\
  University of Padova\\
  \texttt{matteo.cederle@phd.unipd.it} \\
  \And
  Saverio Bolognani \\
  Automatic Control Laboratory \\
  ETH Zurich \\
  \texttt{bsaverio@ethz.ch} \\
  \AND
  Gian Antonio Susto \\
  Department of Information Engineering \\
  University of Padova\\
  \texttt{gianantonio.susto@unipd.it} \\
}

\begin{document}

\maketitle

\begin{abstract}
  Dynamic Population Games (DPGs) provide a tractable framework for modeling strategic interactions in large populations of self-interested agents, and have been successfully applied to the design of Karma economies, a class of fair non-monetary resource allocation mechanisms. Despite their appealing theoretical properties, existing computational tools for DPGs assume full knowledge of the game model and operate in a centralized fashion, limiting their applicability in realistic settings where agents have access only to their own private experience. This paper takes a step towards addressing this gap by studying model-free equilibrium learning in Karma DPGs. First, we analyze the setting in which a novel agent joins a Karma DPG already at its Stationary Nash Equilibrium (SNE) and learns a policy via Deep Q-Networks (DQN) without knowledge of the game model. Leveraging recent convergence results for DQN, we establish a suboptimality bound consisting of a DQN approximation error of order $O(1/\sqrt{N_s})$ and a mean field perturbation error of order $O(1/N)$, where $N_s$ is the replay buffer size and $N$ is the population size. Second, we consider the challenging problem of learning the SNE from scratch. We show empirically that combining deep RL with fictitious play and smoothed policy iteration allows agents to converge, in a model-free fashion, to a configuration close to the centrally computed SNE. Together, these contributions support the vision of Karma economies as practical tools for fair resource allocation.\footnote{Code is provided as a supplementary zip file (see supplementary material). If accepted, it will be made publicly available.}
\end{abstract}

\section{Introduction}
\label{sec:intro}

The allocation of scarce resources among self-interested agents is a fundamental challenge in modern engineered and socio-economic systems. Classical game-theoretic approaches model such interactions as finite $N$-player games, but the analysis and computation of equilibria become intractable as $N$ grows. Mean field games (MFGs), introduced independently by Lasry and Lions \citep{lasry2007mean} and Huang et al. \citep{huang2006large}, overcome this challenge by replacing individual interactions with the interaction of a representative agent with the aggregate behavior of the population, reducing an intractable $N$-player problem to a tractable fixed-point system. This approximation has proven effective across a wide range of applications~\citep{petrakova2022mean, tajeddini2018mean, cardaliaguet2018mean, mao2022mean}, attracting rapidly growing interest from both the control and machine learning communities.

A particularly promising approach to fair scarce resource allocation is represented by \emph{Karma economies}~\citep{elokda2024self, censi2019today}, non-monetary incentive schemes in which agents bid for shared resources using non-tradable tokens that circulate according to designer-specified exchange rules. Karma economies have been shown to achieve fairness and efficiency across multiple allocation problems~\citep{elokda2025carma, cederle2026towards}. Their natural mathematical framework is that of \emph{Dynamic Population Games} (DPGs)~\citep{elokda2024dynamic}, a class of discrete-time, finite state-and-action, stationary MFGs at the intersection of MFGs and classical population games~\citep{sandholm2010population}. DPGs generalize standard MFGs in a key way: both the reward function and transition probabilities depend on the \emph{full} mean field pair $(\mu, \pi)$ --- the joint population state distribution and policy --- rather than on the state distribution alone. This richer coupling, corresponding to the General MFG (GMFG) framework of Guo et al. ~\cite{guo2019learning, guo2023general}, captures the essence of Karma economies, where incentives depend not only on how karma is distributed but also on how agents are bidding. Beyond this modeling flexibility, DPGs enjoy a remarkable structural property: their Stationary Nash Equilibrium (SNE) can be reduced to a Nash Equilibrium of an equivalent static population game, unlocking tools for equilibrium analysis and computation via evolutionary dynamics~\citep{elokda2024dynamic}.

Despite these appealing properties, existing computational tools for DPGs and GMFGs either require strong contractivity assumptions~\citep{guo2019learning, guo2023general, cui2021approximately} rarely verified in practice, or rely on centralized algorithms with full knowledge of the game primitives~\citep{elokda2024dynamic}. In practice, individuals participating in a Karma economy experience only their own private state and receive reward feedback --- they cannot access the game model or observe the population distribution. This raises a natural question: \emph{can agents learn to play a DPG equilibrium in a model-free fashion, using only individual experience?} Reinforcement Learning (RL) offers a natural framework for this problem, and a growing literature has studied learning in MFGs~\citep{guo2019learning, guo2023general, yardim2023policy, subramanian2019reinforcement}. However, as discussed in \autoref{sec:relworks}, model-free learning in the fully coupled DPG setting has received little attention; this paper takes a step towards addressing this gap with the following contributions:

\begin{itemize}[leftmargin=1em]
    \item Motivated by the plug-and-play nature of Karma economies, where new participants can join an established system, we analyze the single-agent setting in which a novel agent joins a Karma DPG already at a SNE configuration and wishes to learn a good policy via Deep Q-Networks (DQN) \citep{mnih2015human} with $\varepsilon$-greedy exploration, without knowledge of the game model. Leveraging recent results on the convergence of DQN \citep{zhang2023convergence}, we establish a suboptimality bound for the learned policy, composed of a DQN approximation error of order $O(1/\sqrt{N_s})$ and a mean field perturbation error of order $O(1/N)$, where $N_s$ is the replay buffer size and $N$ is the population size. Moreover, the population mean field remains within $O(1/N)$ of the equilibrium throughout the learning process, so that the new learning agent does not disrupt the equilibrium of the remaining population. \\
    We validate these guarantees through numerical experiments, where we practically observe the $O(1/\sqrt{N_s})$ and $O(1/N)$ scaling of the two error components for the new agent's policy.
    \item We then consider the challenging problem of learning the Karma DPG equilibrium from scratch, when agents have no information about the game structure. We empirically show that combining deep RL with \textit{fictitious play} \citep{brown1951iterative, cardaliaguet2017learning, perrin2020fictitious} and \textit{smoothed policy iteration} \citep{lauriere2023policy, tang2024learning} allows a population of agents to converge, in a model-free fashion, to a configuration that is close to a centrally computed SNE. While a theoretical analysis of this convergence remains an open problem, our experiments suggest that the combination of deep RL with fictitious play and smoothed policy iteration is a practically viable approach to model-free equilibrium learning in DPGs.
\end{itemize}
 
Taken together, these contributions represent a step towards making DPGs --- and Karma economies in particular --- more accessible as practical tools, by showing that reasonable equilibrium behavior can emerge from model-free learning even when agents have no knowledge of the game model. We view this work as complementary to the existing literature on learning in MFGs and GMFGs, and hope it motivates further investigation into model-free learning methods for this class of games.
\section{Related Works}
\label{sec:relworks}

\paragraph{Learning in Mean Field Games}
A growing body of work has studied how agents can learn equilibria in MFGs through reinforcement learning; we refer the reader to Laurière et al. \cite{lauriere2022learning} for a comprehensive survey. Early works laid the groundwork for RL-based approaches~\citep{guo2019learning, subramanian2019reinforcement, yang2018mean}, but typically assume uniqueness of the Nash equilibrium via contractivity of the mean-field operator~\citep{zeng2024policy, cui2021approximately}, and operate under the classical MFG formulation where rewards and transitions depend only on the state distribution, neither of which holds in Karma economies. The closest work to ours is Guo et al. \cite{guo2023general}, which considers a similar GMFG framework and proves convergence of value-based and policy-based RL algorithms, but still requires stability and uniqueness of the equilibrium. Other policy-based approaches~\citep{xie2021learning, yardim2023policy, ocello2025finite, fouque2025convergence} either impose strong structural assumptions or consider only the classical MFG setting. Notably, Yardim et al.  \cite{yardim2023policy} demonstrate convergence on the true $N$-player game without a population generative model, but do not address the GMFG setting.
From an algorithmic perspective, two main paradigms have emerged: Fictitious Play (FP)~\citep{perrin2020fictitious, elie2020convergence} and Online Mirror Descent (OMD)~\citep{perolat2021scaling, wu2024population}. While OMD enjoys strong theoretical guarantees, its convergence typically requires monotonicity conditions not satisfied in Karma games. FP has proven to be a robust practical alternative~\citep{elie2020convergence, magnino2026bench}, and its combination with deep RL has been explored previously~\citep{lauriere2022scalable, perrin2021mean}, though not in the GMFG framework. Convergence guarantees for deep RL combined with FP remain an open problem even in the classical MFG setting~\citep{lauriere2022scalable}, a limitation we acknowledge; what we show empirically is that this combination is practically effective in the GMFG setting of DPGs, a regime previously unexplored. Rather than proposing a new general framework for learning in MFGs, this work develops theoretical and empirical tools to support agents participating in Karma economies, enabling them to learn effective bidding strategies from individual experience alone.

\paragraph{Mechanisms for fair resource allocation}
Designing allocation mechanisms that are simultaneously fair and efficient is a longstanding challenge. Monetary mechanisms~\citep{vickrey1969congestion, vickrey1961counterspeculation} are the classical approach, but their reliance on money raises significant fairness concerns in socio-technical contexts~\citep{elokda2025vision}, and a fundamental impossibility result~\citep{hylland1980strategy} establishes that no strategy-proof, Pareto-efficient, non-dictatorial rule exists without money in single-shot settings. Non-monetary alternatives~\citep{hylland1979efficient, gao2021online} address these concerns but are typically designed for finite horizons and vulnerable to budget depletion~\citep{elokda2025vision}. Karma economies~\citep{elokda2024self} circumvent these limitations by exploiting the repeated nature of resource allocation: non-tradable tokens that circulate through bidding and redistribution implement Maximum Long-run Nash Welfare in a decentralized and trustworthy manner, without requiring interpersonal utility comparisons~\citep{elokda2025vision}.
\section{Preliminaries}
\label{sec:back}
Let $a,d \in D \subseteq \mathbb{N}$  and let $c \in C\subseteq \mathbb{R}^n$, then for a function $f : D \times C \rightarrow \mathbb{R}$, we distinguish discrete and continuous arguments through the notation $f[d](c)$. Alternatively, we write $f : C \rightarrow \mathbb{R}^{|D|}$ as the vector-valued function $f(c)$, with $f[d](c)$ denoting its $d$-{th} element. Similarly, $p[a \mid d](c)$ denotes the conditional probability of $a$ given $d$ and $c$.
We denote by $\Delta(D):=\{ p \in \mathbb{R}_+^{|D|} \rvert \sum_{d \in D} p[d] = 1\}$ the set of probability distributions over the elements of $D$. For a function $f:D\times C\to\mathbb{R}$, we denote by $||f||_\infty:=\sup_{d\in D,c\in C}|f[d](c)|$ its infinity norm, i.e., the supremum of its absolute value over all discrete and continuous arguments.

\paragraph{Dynamic Population Games (DPGs)}
A DPG, as introduced in Elokda et al. \cite{elokda2024dynamic}, models a large population $\mathcal{N} = \{1, \ldots, N\}$ of anonymous, homogeneous, self-interested agents, where $N$ is large enough that agents approximately form a continuum of mass. Each agent has a private
\emph{state} $x \in \mathcal{X} = \{1, \ldots, N_x\}$, that evolves over time. Both the state and action spaces are finite. At each discrete time step, an agent in state $x$ selects an action $a \in \mathcal{A}[x]$, according to a stationary policy $\pi: \mathcal{X} \to \Delta(\mathcal{A}[x])$, which maps the agent's own state to a probability distribution over actions, without conditioning on the population distribution.
The macroscopic state of the game is described by the \emph{mean field} $(\mu, \pi) \in \mathcal{M} \times \Pi$, where
and $\mu \in \mathcal{M}= \Delta(\mathcal{X})$ is the state distribution of the population. Crucially, both the \emph{state transition function} $p[x^+ \mid x, a](\mu, \pi)$ and the \emph{immediate reward function} $r[x, a](\mu, \pi)$ depend on the \emph{full} mean field pair $(\mu, \pi)$, capturing both the population state distribution and the population behavioral policy. This distinguishes DPGs from classical MFGs, where the coupling is only through the state distribution $\mu$, and places DPGs within the General MFG framework of Guo et al. \cite{guo2019learning}.
 
Each agent faces a discounted infinite-horizon Markov Decision Process (MDP) with discount factor $\alpha \in [0,1)$, aiming to maximize its expected discounted cumulative reward. Each agent’s MDP is coupled to  other agents through the mean field $(\mu, \pi)$. We can define the \emph{state-action value function}:
\begin{equation}
    Q[x, a](\mu, \pi) = r[x, a](\mu, \pi)
    + \alpha \sum_{x^+ \in \cX} p[x^+ \mid x, a](\mu, \pi)\,
      V[x^+](\mu, \pi),
    \label{eq:q_function}
\end{equation}
where $V[x](\mu, \pi) = \sum_{a \in \mathcal{A}[x]} \pi[a|x]Q[x, a](\mu, \pi)$ is the \emph{value function}. The optimal policy for a player in state $x$ belongs to the \emph{best response correspondence} $\mathcal{B}[x](\mu, \pi)$:
\begin{equation}
    \pi^*[\cdot \mid x]\in\mathcal{B}[x](\mu, \pi) = \arg\max_{a \in \mathcal{A}[x]}
    Q[x, a](\mu, \pi).
    \label{eq:best_response}
\end{equation}
 
Finally, we can define the solution concept in Dynamic Population Games.
\begin{definition}[\textit{Stationary Nash Equilibrium \citep{elokda2024dynamic}}]
\label{def:sne}
A mean field $(\mu^*, \pi^*) \in \mathcal{M} \times \Pi$ is a Stationary Nash Equilibrium (SNE) if, for all $x \in \mathcal{X}$:
\begin{align}
    &\mu^*[x] = \sum_{x^- \in \cX} \sum_{a^- \in \mathcal{A}[x^-]} \mu^*[x^-]\,\pi[a^-|x^-]\, 
                   p[x \mid x^-,a^-](\mu^*, \pi^*),
                   \tag{SNE.1} \label{eq:sne1} \\
    &\pi^*[\cdot \mid x] \in \mathcal{B}[x](\mu^*, \pi^*).
                   \tag{SNE.2} \label{eq:sne2}
\end{align}
\end{definition}
 
Condition~\eqref{eq:sne1} requires the state distribution $\mu^*$ to be the stationary distribution of the Markov chain induced by the equilibrium policy $\pi^*$ under the equilibrium mean field, ensuring time-invariance. Condition \eqref{eq:sne2} requires each agent to play optimally given the equilibrium mean field. Together, these conditions define a fixed point that is simultaneously optimal and self-consistent.
 
A key theoretical result of Elokda et al. \cite{elokda2024dynamic} is that every DPG can be reduced to an equivalent \emph{static population game}, whose Nash Equilibria coincide with the SNEs of the DPG. This reduction unlocks a set of tools from the population games literature \citep{sandholm2010population}, including evolutionary dynamics for SNE computation and explicit conditions for existence of the SNE. In particular, under mild continuity assumptions on $r$ and $p$, at least one SNE is guaranteed to exist \citep[Proposition~1]{elokda2024dynamic}.
 
\paragraph{Karma Economies}
Karma economies, introduced in \cite{censi2019today, elokda2024self}, are a class of non-monetary resource allocation mechanisms designed to achieve fairness and efficiency in the repeated allocation of scarce resources among a large population of self-interested agents. The key idea is to issue each agent a budget of non-tradable \emph{karma} tokens that can be used to bid for a shared resource. Karma flows from agents who acquire the resource to those who yield it, according to exchange rules specified by the mechanism designer.
Because karma cannot be traded or accumulated indefinitely, it naturally incentivizes agents to be strategic and forward-looking: spending karma today means having less available for future high-urgency situations. In the Karma DPG model, we take the point of view of an \textit{ego agent} playing against the population, from which a random opponent is drawn in every resource competition instance. The agent has a \textit{private} state $x=[u,k]\in\mathcal{X}=\mathcal{U}\times\{0,...,K\}$, where the urgency state $u$ represents a valuation for the resource and takes one of the values in the discrete and finite set $\mathcal{U}$, and $k$ is the current \textit{karma balance}, limited by some $K\in\mathbb{N}\backslash\{\infty\}$. The ego agent performs actions in the form of
karma bids $a \in \mathcal{A}[k]=\{0,...,k\}$, and it is equipped with the following reward function and state transition probability, dependent on the mean field pair $(\mu,\pi)$:
\begin{align}
    &r[u,k,a](\mu,\pi)=r[u,a](\mu,\pi)=u\sum_{a'}\sum_{u,k}\mu[u,k]\pi[a|u,k]\mathbb{P}[o=\texttt{win}|a,a'], \\
    &p[u^+,k^+|u,k,a](\mu,\pi)=\phi[u^+|u]\kappa[k^+|k,a](\mu,\pi),
\end{align}
where $o\in\mathcal{O}=\{\texttt{win},\texttt{lose}\}$ denotes the resource competition outcome, and $\mathbb{P}[o=\texttt{win}|a,a']=1$ if $a>a'$, $0$ if $a<a'$, and $0.5$ if $a=a'$, with $a'$ being the opponent's action, drawn from the population. Moreover, $\phi[u^+|u]$ is an exogenous, irreducible Markov chain process, and $\kappa[k^+|k,a](\mu,\pi)$ is the \textit{karma transition function}, which can take various forms, depending on the mechanism designer choices; the karma transition function used for the experiments in this work is the \textit{pay bid to society} scheme,
in which the winning agent pays its bid into a common pool that is then redistributed equally among all agents; the full specification of the karma transition kernel is provided in \autoref{appsec:karma_extra}, along with other possible implementations. Finally, as mentioned above for generic DPGs, under mild assumptions, a Karma SNE is guaranteed to exist; however, as mentioned by Elokda et al. \cite{elokda2024self}, the uniqueness of such equilibrium is still an open research question.

\paragraph{Deep Q-Networks}
In high-dimensional, decentralized settings where the game primitives $r$ and $p$ are unknown, the optimal action-value function $Q^*$ cannot be computed analytically and must instead be learned from experience. Deep Q-Networks (DQN) \cite{mnih2015human} is one of the first deep RL algorithms which approximates the optimal Q-function using a deep neural network $Q(\boldsymbol{W}; x, a)$, parameterized by weights $\boldsymbol{W}\in\mathbb{R}^d$, trained to minimize the \emph{mean squared Bellman error} (MSBE):
\begin{equation}
    \min_{\boldsymbol{W}} \; \mathbb{E}_{(x,a) \sim d}\left[
    Q(\boldsymbol{W}; x, a) - r(x,a) - \alpha\,\mathbb{E}_{x'|x,a}[\max_{a'} Q(\boldsymbol{W}; x', a')]
    \right]^2,
    \label{eq:msbe}
\end{equation}
where $d$ is the distribution of $(x,a)$ following the optimal policy $\pi^*$. Two key stabilization techniques distinguish DQN from naive Q-learning with function approximation: an \emph{experience replay buffer} $\mathcal{D}$, which stores past transitions $(x, a, r, x')$ and breaks temporal correlations in the training data by sampling random mini-batches; and a \emph{target network} with frozen weights $\bar{\boldsymbol{W}}$, periodically updated, which provides stable regression targets $y = r + \alpha \max_{a'} Q(\bar{\boldsymbol{W}}; x', a')$ during training. Action selection follows an $\varepsilon$-greedy policy: with probability $\varepsilon$ a random action is chosen for exploration, and with probability $1 - \varepsilon$ the greedy action $\arg\max_a Q(\boldsymbol{W}; x, a)$ is selected. Decaying $\varepsilon$ over time encourages exploration early in training and exploitation once a good policy has been found. Zhang et al. \cite{zhang2023convergence} provide the first theoretical convergence and sample complexity analysis of DQN with $\varepsilon$-greedy exploration, showing that with decaying $\varepsilon$ the learned weights converge geometrically to the optimal weights $\boldsymbol{W}^*$, up to an estimation error of order $O(1/\sqrt{N_s})$, where $N_s$ is the replay buffer size. We build directly on this result in \autoref{sec:method1}.
\section{Decentralized Learning for a Novel Agent in a Karma Economy via DQN}
\label{sec:method1}
We consider a Karma DPG operating at a Stationary Nash Equilibrium $(\mu^*, \pi^*)$. This setting is practically motivated by the deployment nature of Karma economies: in any real-world implementation, new participants continuously join an existing system in which the incumbent population has already learned to play at equilibrium. A Karma economy must ideally be \emph{plug-and-play} in the sense that new agents can join at any time and learn effective bidding strategies from scratch, without disrupting the equilibrium behavior of the existing population. As a first step towards this goal, we analyze the single-agent case, in which one novel non-infinitesimal agent joins the population and seeks to learn a good policy through direct interaction with the game, without any knowledge of the reward function $r$, the transition kernel $p$, or the equilibrium mean field $(\mu^*, \pi^*)$. The agent employs DQN with $\varepsilon$-greedy exploration \citep{mnih2015human} as its learning algorithm. The central question we address is whether such an agent can learn a policy that is close to the equilibrium policy $\pi^*$, and whether its presence disrupts the equilibrium of the remaining population. The new agent faces a \emph{perturbed} MDP: since its policy $\pi_t$ differs from $\pi^*$ during learning, the full mean field shifts from $(\mu^*, \pi^*)$ to a perturbed value $(\tilde{\mu}_t, \tilde{\pi}_t)$, which in turn affects the rewards and transitions experienced by the agent. Our main result, stated in \autoref{thm:main}, shows that the new agent's learned policy is guaranteed to be close to the equilibrium policy, with a suboptimality bound that decomposes into two independent components.
\begin{assumption}
\label{ass:nondegen}
The Stationary Nash Equilibrium $(\mu^*, \pi^*)$ is non-degenerate, in the sense that
the equilibrium Q-function has a strict argmax (action gap) at every state:
\begin{equation}
    \delta := \min_{x \in \mathcal{X}} \left(
        Q^*[x, \pi^*(x)](\mu^*,\pi^*) - \max_{a \neq \pi^*[x]} Q^*[x, a](\mu^*,\pi^*)
    \right) > 0.
    \label{eq:delta}
\end{equation}
\end{assumption}
The concept of action gap plays a central role in approximate reinforcement learning methods, where a larger separation between optimal and suboptimal actions improves the robustness of greedy policies to estimation and approximation errors \cite{bellemare2016increasing}. Moreover, as already shown by \cite[Section 4.2]{guo2023general}, Assumption~\ref{ass:nondegen} should not be viewed as practically restrictive in our scenario, given the compactness of the probability simplex $\mathcal{P}(\mathcal{X}\times\mathcal{A})$ and the finiteness of the action set $\mathcal{A}$.

With this, we can now present a preliminary lemma, followed by the main result of this section, whose proof is provided in \autoref{appsec:proof}.
\begin{lemma}
\label{lem:lipschitz_primitives}
The immediate reward function $r[\cdot](\mu,\pi)$ and the state transition probability
$p[\cdot](\mu,\pi)$ are Lipschitz continuous in $(\mu,\pi)$ on
$\mathcal{M} \times \Pi$. That is, there exist constants $L_r, L_p > 0$ such that
for all $s = (\mu,\pi)$ and $s' = (\mu',\pi')$:
\begin{align*}
    \norm{r[\cdot](s) - r[\cdot](s')}_\infty
    \le L_r \norm{s - s'}, \quad
    \norm{p[\cdot](s) - p[\cdot](s')}_\infty
    \le L_p \norm{s - s'}.
\end{align*}
\end{lemma}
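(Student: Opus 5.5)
The plan is to exploit the explicit structure of the Karma DPG. The reward and the transition kernel are built from finitely many elementary operations on the entries of $\mu$ and $\pi$: sums, products, and compositions with fixed bounded coefficients. Each such building block is Lipschitz on the compact set $\mathcal{M}\times\Pi$, and sums and products of bounded Lipschitz functions stay Lipschitz. Since all state and action spaces are finite, all norms on $\mathcal{M}\times\Pi\subset\mathbb{R}^{|\mathcal{X}|+|\mathcal{X}||\mathcal{A}|}$ are equivalent. I will therefore work with the $\ell_1$ norm and absorb the conversion constants into $L_r$ and $L_p$.

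\textbf{Reward.} First I write the reward as
\[
r[u,a](\mu,\pi)=u\sum_{a'}\nu[a'](\mu,\pi)\,\mathbb{P}[\texttt{win}\mid a,a'],
\]
where $\nu[a'](\mu,\pi)=\sum_{u',k'}\mu[u',k']\pi[a'\mid u',k']$ is the population bid distribution. The map $(\mu,\pi)\mapsto\nu$ is bilinear. Using $\mu\pi-\mu'\pi'=(\mu-\mu')\pi+\mu'(\pi-\pi')$, together with $\sum_{a'}\pi[a'\mid x]=1$ and $\sum_x\mu'[x]=1$, I get
\[
\|\nu-\nu'\|_1\le\|\mu-\mu'\|_1+\max_x\|\pi[\cdot\mid x]-\pi'[\cdot\mid x]\|_1\le 2\|s-s'\|_1 .
\]
Since $0\le\mathbb{P}[\texttt{win}\mid a,a']\le 1$ and $u\le u_{\max}:=\max\mathcal{U}$, this gives $\|r(s)-r(s')\|_\infty\le 2u_{\max}\|s-s'\|_1$.

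\textbf{Transition kernel.} The factor $\phi$ is exogenous, so
\[
|p(s)-p(s')|\le\max_{u,u^+}\phi[u^+\mid u]\,|\kappa(s)-\kappa(s')|,
\]
and it suffices to treat $\kappa$. For the pay-bid-to-society scheme of \autoref{appsec:karma_extra}, $\kappa[k^+\mid k,a](\mu,\pi)$ depends on $(\mu,\pi)$ only through two quantities:
\begin{itemize}[leftmargin=1em]
\item the win probability $\sum_{a'}\nu[a']\mathbb{P}[\texttt{win}\mid a,a']$, which is Lipschitz by the reward argument;
\item the expected redistribution. This is the average payment $\bar p(\mu,\pi)=\sum_{a,a'}\nu[a]\nu[a']\,a\,\mathbb{P}[\texttt{win}\mid a,a']$, a quadratic form in $\nu$ with coefficients bounded by $K$, and hence Lipschitz with constant $O(K)$.
\end{itemize}
The kernel is then a finite sum of products of these Lipschitz, $[0,1]$-bounded (or $[0,K]$-bounded) terms with fixed coefficients, so it is Lipschitz by the product rule $|fg-f'g'|\le|f||g-g'|+|g'||f-f'|$.

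\textbf{Main obstacle.} The redistribution scheme typically rounds $\bar p$ to an integer number of tokens, for instance giving $\lfloor\bar p\rfloor$ or $\lceil\bar p\rceil$ karma units. A literal floor would be discontinuous. I expect the appendix specification to realise the rounding stochastically, assigning $\lceil\bar p\rceil$ with probability equal to the fractional part $\bar p-\lfloor\bar p\rfloor$. This randomised rounding is continuous and piecewise linear in $\bar p$ with slope at most $1$, and therefore Lipschitz; it has to be checked against the exact kernel definition. The same piecewise-linear check handles the saturation at the karma cap $K$, i.e. clipping $k^+$ to $\{0,\dots,K\}$. Once it is verified, composing Lipschitz maps gives $L_p$, and the lemma follows with explicit constants depending on $u_{\max}$, $K$ and $|\mathcal{X}|$.
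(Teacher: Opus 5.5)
Your proposal is correct and follows essentially the paper's route: the reward is bilinear in $(\mu,\pi)$, and $\kappa$ is Lipschitz because it depends on $(\mu,\pi)$ only through $\gamma$ and $\bar p$, via finite sums, products and piecewise-affine operations; you additionally make the constants explicit. The rounding step you flagged is indeed randomized in the appendix, with $f^{\text{low}}=\lceil\bar p\rceil-\bar p$ and $f^{\text{high}}=1-f^{\text{low}}$, so each kernel entry is a continuous hat function of $\bar p$ with slope at most $1$ (reading the two overlapping cases additively when $\bar p$ is an integer), and your argument goes through.
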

\begin{theorem}
\label{thm:main}
Let Assumption~\ref{ass:nondegen} hold. Consider a Karma DPG
with $N$ agents at a SNE configuration $(\mu^*, \pi^*)$, with population size satisfying: $N > 2C_{MF}/\delta$, where $C_{MF}$ is defined in \eqref{eq:C_MF} and $\delta$ is the action gap from
\eqref{eq:delta}. A new agent joins the population and runs DQN with
$\varepsilon$-greedy exploration \cite{mnih2015human} with replay buffer of size $N_s$
and decaying $\varepsilon_t$. The final learned policy is $\hat{\pi}[x]=\arg\max_{a\in\mathcal{A}[x]}Q(\hat{\boldsymbol{W}};x,a)$, where the learned Q-function $Q(\hat{\boldsymbol{W}};\cdot,\cdot)$ satisfies:
\begin{equation}
    \sup_{(x,a)\in\mathcal{X}\times\mathcal{A}} \abs{Q(\boldsymbol{\hat{W}}; x, a) - Q^*[x, a](\mu^*,\pi^*)}\leq \frac{C_{DQN}}{\sqrt{N_s}} + \frac{C_{MF}}{N},
    %\quad \forall x \in \cX,
    \label{eq:final_bound}
\end{equation}
where $C_{DQN}$ is the DQN approximation constant from Zhang et al. \cite{zhang2023convergence} and:
\begin{equation}
    C_{MF} = \frac{(1 - \alpha) L_r
                   + \alpha R_{\max}|\mathcal{X}| L_p}{(1 - \alpha)^2} \cdot C_0,
    \label{eq:C_MF}
\end{equation}
with $C_0 = \sup_{s,s' \in \mathcal{M} \times \Pi} \norm{s - s'}_\infty$ and $R_{\max}$ the largest immediate reward achievable in the game.
\end{theorem}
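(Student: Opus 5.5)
The plan is to split the error with the triangle inequality through an intermediate object: the optimal Q-function of the \emph{perturbed} MDP that the new agent actually faces. When the new agent plays some policy $\pi_t$ while the other $N-1$ agents play $\pi^*$, the empirical mean field seen by any agent is a mixture. It is $(\tilde\mu_t,\tilde\pi_t) = \frac{N-1}{N}(\mu^*,\pi^*) + \frac1N(\mu_t,\pi_t)$, up to the convention on whether the ego agent counts itself. Hence
\[
\norm{(\tilde\mu_t,\tilde\pi_t)-(\mu^*,\pi^*)}_\infty \le \tfrac1N \norm{(\mu_t,\pi_t)-(\mu^*,\pi^*)}_\infty \le \tfrac{C_0}{N},
\]
uniformly in $t$. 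This also gives the side claim that the incumbent population's mean field stays $O(1/N)$-close to equilibrium. Let $\tilde Q$ be the optimal Q-function of the MDP with primitives $r(\tilde s)$ and $p(\tilde s)$. Then
\[
\abs{Q(\hat{\boldsymbol W};x,a)-Q^*[x,a](\mu^*,\pi^*)} \le \abs{Q(\hat{\boldsymbol W};x,a)-\tilde Q[x,a]} + \abs{\tilde Q[x,a]-Q^*[x,a](\mu^*,\pi^*)},
\]
and we bound the two terms separately.

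\textbf{Mean field term.} Using Lemma~\ref{lem:lipschitz_primitives}, I would run a standard simulation-lemma argument on the Bellman optimality operators $T_s$ and $T_{\tilde s}$. Since $T_{\tilde s}$ is an $\alpha$-contraction, the fixed points satisfy $\norm{\tilde Q - Q^*}_\infty \le \frac{1}{1-\alpha}\norm{T_{\tilde s}Q^* - T_{s}Q^*}_\infty$. The operator difference splits into a reward part, at most $L_r\norm{\tilde s-s}$, and a transition part, at most $\alpha\,|\mathcal X|\,L_p\norm{\tilde s - s}\,\norm{V^*}_\infty$, with $\norm{V^*}_\infty\le R_{\max}/(1-\alpha)$. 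Combining these with $\norm{\tilde s-s}\le C_0/N$ gives exactly $\frac{(1-\alpha)L_r+\alpha R_{\max}|\mathcal X|L_p}{(1-\alpha)^2}\cdot\frac{C_0}{N} = C_{MF}/N$.

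\textbf{DQN term.} The subtlety is that the perturbed MDP is not fixed: $\tilde s_t$ moves with the agent's own policy, so the environment is nonstationary. My first attempt would be to show that the argmax policy, and hence the data-generating process, eventually stabilizes. The condition $N>2C_{MF}/\delta$ gives $C_{MF}/N<\delta/2$. By Assumption~\ref{ass:nondegen}, any perturbed optimal Q-function within $C_{MF}/N$ of $Q^*$ then has the same strict argmax $\pi^*$. Consequently the optimal policy of every perturbed MDP encountered, for any $\tilde s$ in the $C_0/N$ ball, coincides with $\pi^*$. Once $\varepsilon_t$ has decayed and the greedy policy has locked onto $\pi^*$, the ego agent's policy converges. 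The limiting mean field is then $(\mu^*,\pi^*)$ itself, or at least a fixed $\tilde s$. On this now-stationary MDP the hypotheses of Zhang et al.~\cite{zhang2023convergence} hold, and their theorem gives $\norm{Q(\hat{\boldsymbol W};\cdot)-\tilde Q}_\infty\le C_{DQN}/\sqrt{N_s}$, where I take $\tilde Q$ to be the fixed point of the limiting perturbed MDP.

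I expect this last step to be the main obstacle: justifying the application of the Zhang et al.\ result despite the nonstationarity during the transient phase. I would handle it by absorbing the drift into the same $C_{MF}/N$ slack. The approach is to view every intermediate MDP as lying in a uniform $O(1/N)$ neighborhood with a common optimal policy $\pi^*$ and common target-weight region. This is what the action gap buys us, and it lets the geometric-convergence-plus-estimation-error bound of Zhang et al.\ pass through with constants uniform over that neighborhood.
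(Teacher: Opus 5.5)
Your skeleton matches the paper's. It uses the mixture bound $\norm{\tilde s_t-s^*}\le C_0/N$. It then derives the contraction estimate for the Bellman optimality operators giving exactly $C_{MF}/N$; the paper uses the same split $(\cT_sQ-\cT_sQ')+(\cT_sQ'-\cT_{s'}Q')$. The action-gap argument shows that under $N>2C_{MF}/\delta$ every perturbed MDP has optimal policy $\pi^*$, and a final triangle inequality goes through a perturbed optimal $Q$.

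\textbf{The gap is in the DQN step.} Your primary route is to wait until the greedy policy locks onto $\pi^*$, so that the mean field freezes, and then apply Zhang et al.\ to the limiting MDP. This does not go through, for three reasons.
\begin{itemize}
\item Certifying that the greedy policy of $Q(\boldsymbol{W}^{(t,0)};\cdot,\cdot)$ is $\pi^*$ requires knowing that this network is already within roughly $\delta/2$ of $Q^*[\cdot,\cdot](\mu^*,\pi^*)$. That is the accuracy you are trying to establish, so the argument is circular.
\item It would also need an extra hypothesis tying $N_s$ to $\delta$, which the theorem does not make.
\item With $\varepsilon_t>0$ the behavior policy, and hence $\tilde s_t$, keeps changing even after lock-on. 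The theorem also concerns a finite run of $T\ge\log_\rho(1/N_s)$ outer iterations, not a limit.
\end{itemize}

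\textbf{What the paper does instead.} It needs no stabilization; its key observation is structural. In Algorithm~1 of Zhang et al.\ the data of outer iteration $t$ are collected with the $\varepsilon_t$-greedy policy of the frozen weights $\boldsymbol{W}^{(t,0)}$. So the new agent's policy, and therefore $\tilde s_t$, is constant within the iteration. Each iteration is a stationary MDP $\mathrm{M}_t$, and nonstationarity occurs only across iterations.

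The action-gap argument is then used for one precise purpose. Since every $\mathrm{M}_t$ has optimal policy $\pi^*$, the distribution-shift quantity $C_t$ in Zhang et al.'s analysis is the same as for a fixed MDP. Their Theorem~1 then bounds the final error against $Q^*[\cdot,\cdot](\tilde s_T)$ of the last iteration by $C_{DQN}/\sqrt{N_s}$. The term $C_{MF}/N$ is paid exactly once through the triangle inequality, with no need for the mean field to converge.

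Your fallback (common optimal policy, uniform neighborhood) points at this idea. Without the piecewise-stationarity observation, though, it is unclear what the ``intermediate MDPs'' are, or why Zhang et al.'s per-iteration analysis applies to them.

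\textbf{On absorbing the drift.} ``Absorbing the drift into the same $C_{MF}/N$ slack'' cannot reproduce the stated constant. Suppose you actually charged for the movement of the fixed point $Q^*[\cdot,\cdot](\tilde s_t)$, which is of order $C_{MF}/N$ per iteration, inside a geometric recursion with rate $\rho$. You would pick up an additional term of order $C_{MF}/(N(1-\rho))$: the right $O(1/N)$ order, but not the constant $C_{MF}$. The theorem as written rests instead on the paper's claim that the change of $\mathrm{M}_t$ enters Zhang et al.'s bound only through $C_t$. Be aware that the paper does not explicitly track the motion of $\boldsymbol{W}^*(\tilde s_t)$ between iterations either, so this is also the least detailed step of its own proof.
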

The error bound in \eqref{eq:final_bound} admits a natural decomposition into two independent components. The first is the DQN approximation error, of order \(O(1/\sqrt{N_s})\), which decreases with the number of training samples and is independent of the population size \(N\). The second is a mean-field perturbation term, of order \(O(1/N)\), which vanishes as the population grows and is independent of the sample size. 
Finally, we remark that our result does not rely neither on uniqueness of the Stationary Nash Equilibrium nor on global stability conditions on the DPG, which is indeed not guaranteed in the Karma model~\cite{elokda2024self}. 
\section{Towards Model-Free Equilibrium Learning in Karma DPGs}
\label{sec:method2}

We now turn to the more challenging and practically relevant problem of learning the Stationary Nash Equilibrium of a Karma DPG entirely from scratch, without any prior knowledge of the game model or access to the centrally computed equilibrium. In this setting, the agents start from uninformed, arbitrary policies and learn through individual experience.
 
\paragraph{FP-DQN Algorithm}
Our simple approach combines deep reinforcement learning with \emph{fictitious play}~\citep{brown1951iterative, perrin2020fictitious} and \emph{smoothed policy iteration}~\citep{lauriere2023policy, tang2024learning}, two paradigms for equilibrium learning in games in which a representative player best-responds to an empirical average of its own past policies. In DPGs, the analog of the empirical average is the time-averaged mean field $(\bar{\mu}_i, \bar{\pi}_i)$, which aggregates the population distribution and policy across all past iterations. The averaging prevents the algorithm from overreacting to the noisy policy estimates produced by the DQN agent in early iterations, and it ensures that the mean field seen by the representative player evolves smoothly across outer iterations rather than jumping abruptly in response to the most recent learned policy.
 
The full procedure is described in Algorithm~\ref{alg}. At each outer iteration $i$, the representative agent runs DQN for $E$ episodes against a fixed population playing $(\bar{\mu}_i, \bar{\pi}_i)$, the current time-averaged mean field. The population is reset to this configuration at the beginning of each episode, ensuring that the MDP faced by the learning agent is stationary within each outer iteration, consistently with the theoretical setting of \autoref{sec:method1}. After $E$ episodes, the agent's learned policy $\pi_{i,E}$ and the empirical state distribution $\mu_{i,E}$ --- computed at the end of iteration $i$ as the normalized histogram of the population states:
\begin{equation}
    \mu_{i,E}[x] = \frac{1}{N} \sum_{j=1}^{N} \mathbf{1}[x_j = x], \quad \forall x \in \cX,
    \label{eq:empirical_mu}
\end{equation}
where $x_j$ denotes the state of agent $j$ at the end of the final episode --- are recorded and used to update the running averages of the state distribution and population policy, following equation \eqref{eq:fp_update}.

The algorithm is initialized with a uniform policy $\pi_{0,0}$ and state distribution $\mu_{0,0}$, representing a fully uninformed starting point. At convergence, the returned pair $(\bar{\mu}_I, \bar{\pi}_I)$ represents the algorithm's best estimate of the equilibrium policy and state distribution, which can be directly compared against the centrally computed SNE $(\mu^*, \pi^*)$, as we will do in \autoref{subsec:exp2}.
 
\begin{algorithm}
\caption{FP-DQN: Fictitious Play with DQN for Model-Free Equilibrium Learning}
\label{alg}
\begin{algorithmic}[1]
\State Initialize uniform distribution $\bar{\mu}_0 \leftarrow \mu_{0,0}$
\State Initialize uniform policy $\bar{\pi}_0 \leftarrow \pi_{0,0}$
\For{$i = 0, 1, \ldots, I-1$}
    \State Initialize DQN agent with fresh network weights
    \For{$e = 0, 1, \ldots, E-1$}
        \State Reset environment with population playing
               $(\bar{\mu}_i, \bar{\pi}_i)$
        \State Representative agent plays episode $e$ against the population and updates DQN \citep{mnih2015human, zhang2023convergence}
    \EndFor
    \State Compute $\pi_{i,E}$ as the greedy policy of the trained DQN agent
    \State Compute $\mu_{i,E}[x] = \frac{1}{N}\sum_{j=1}^{N} \mathbf{1}[x_j = x]$ for any $x\in\mathcal{X}$ as the empirical population state distribution at the end of episode $E$
    \State Update the state distribution and population policy: \begin{equation}\label{eq:fp_update}
        \bar{\mu}_{i+1} \leftarrow \dfrac{i}{i+1}\,\bar{\mu}_i
           + \dfrac{1}{i+1}\,\mu_{i,E}, \quad \bar{\pi}_{i+1} \leftarrow \dfrac{i}{i+1}\,\bar{\pi}_i
           + \dfrac{1}{i+1}\,\pi_{i,E}
    \end{equation}
\EndFor
\State \textbf{return} $(\bar{\mu}_I,\bar{\pi}_I)$
\end{algorithmic}
\end{algorithm}
 
\paragraph{Discussion and Limitations}
Algorithm~\ref{alg} is best understood as an application-specific tool rather than a general contribution to the MFG learning literature. It combines three well-established techniques --- DQN for model-free policy optimization, fictitious play for mean field averaging, and smoothed policy iteration for population policy averaging --- and applies them to the Karma DPG, a practically motivated instance of the GMFG framework, which has received relatively little attention from the learning community. We therefore do not present Algorithm~\ref{alg} as a general-purpose algorithm for learning in GMFGs. Rather, we view it as a first step towards making Karma economies practically implementable, and as a demonstration that the combination of deep RL and fictitious play is a viable approach to model-free equilibrium learning in the DPG setting.
 
Finally, we acknowledge that a theoretical analysis of the convergence of Algorithm~\ref{alg} remains an open problem. Existing convergence results for model-free learning in GMFGs typically rely on strong structural assumptions that are not satisfied by the Karma DPG, such as the contractivity of the mean field operator and the uniqueness of the equilibrium \citep{guo2019learning, guo2023general}. As discussed in \autoref{sec:back}, the Karma DPG only guarantees existence of at least one SNE under continuity of the game primitives \citep{elokda2024self}, making it a challenging but practically relevant setting for convergence analysis.
\section{Experimental Evaluation}
\label{sec:exp}
The experiments are structured in two parts. In \autoref{subsec:exp1}, we validate the theoretical guarantees of \autoref{thm:main} by analyzing the behavior of a single novel agent
joining a Karma DPG already at its SNE. In \autoref{subsec:exp2}, we
evaluate the FP-DQN algorithm empirically, showing that a population of agents
can converge to a configuration close to the centrally computed SNE starting
from scratch, without any knowledge of the game model. For both parts, full implementation details, including hyperparameter configurations and training schedules, are reported in \autoref{appsec:hyperparams}.

\subsection{Novel Learning Agent in a Karma Economy at Equilibrium}
\label{subsec:exp1}
We consider a simple instance of Karma economy, where the players possess an average amount of karma equal to $\bar{k}=10$, with $K=40$. The agents also have two different urgency levels, i.e., $\mathcal{U}=\{u_1=1,u_2=5\}$, and $\phi[u_2|u_1]=\phi[u_1|u_2]=0.5$. The SNE $(\mu^*, \pi^*)$ is computed centrally using the evolutionary dynamics algorithm of Elokda et al. \cite{elokda2024dynamic},
and serves as the ground truth against which the learned policy is evaluated throughout.
The novel agent is trained with DQN \cite{mnih2015human} with $\varepsilon$-greedy exploration for 1M steps.

\paragraph{Metrics}
To measure the distance between the learned policy $\hat{\pi}$ and the
equilibrium policy $\pi^*$ throughout training, we adopt the Wasserstein-1 distance averaged over
states:
\begin{equation}
    \mathcal{W}(\hat{\pi}, \pi^*) = \frac{1}{|\cX|}
    \sum_{x \in \cX} W_1\bigl(\hat{\pi}(\cdot \mid x),\,
    \pi^*(\cdot \mid x)\bigr),
    \label{eq:wasserstein_metric}
\end{equation}
where $W_1(\cdot, \cdot)$ denotes the Wasserstein-1 \cite{villani2009optimal, lauriere2022learning} distance between two
probability distributions over $\cA$, which measures the minimum cost of transporting one distribution into the other, with cost proportional to the distance between bid values. 
As an additional metric, we also report the \emph{value gap} between the learned policy $\hat{\pi}$ and the equilibrium policy $\pi^*$, estimated by periodically running an evaluation phase during training. Specifically, every 10k training steps, we run 10 evaluation episodes in which the population plays at the SNE $(\mu^*, \pi^*)$. The agent is evaluated twice: once following greedily its current learned policy $\hat{\pi}$, and once following the equilibrium policy $\pi^*$. The value gap is then estimated as the difference in average cumulative undiscounted returns $\widehat{\mathcal{E}}(\hat{\pi})
    = \bar{G}^{\pi^*} - \bar{G}^{\hat{\pi}}$,
where $\bar{G}^{\pi}$ denotes the average cumulative undiscounted return across the 10 evaluation episodes under policy $\pi$, with the population playing at $(\mu^*, \pi^*)$ throughout. By construction, $\widehat{\mathcal{E}}(\hat{\pi}) \geq 0$, with $\widehat{\mathcal{E}}(\hat{\pi}) = 0$ indicating that the agent achieves the same return as an equilibrium player.

\begin{wrapfigure}{r}{0.56\textwidth}
 \centering
    \begin{subfigure}{0.54\textwidth}
        \centering
        \includegraphics[width=\textwidth]{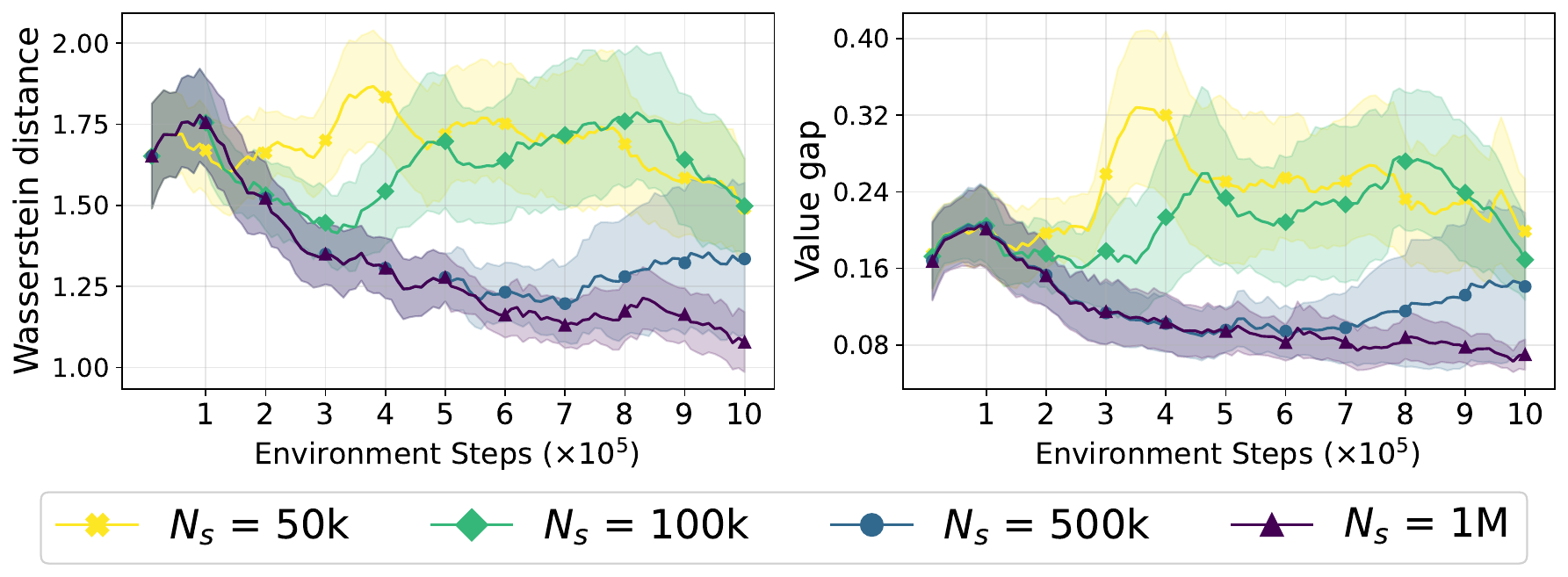}
        \caption{Wasserstein distance and value gap during training for varying buffer sizes $N_s$, with the agent trained against the fixed mean field $(\mu^*, \pi^*)$. Shaded regions: $95\%$ confidence interval across 10 seeds.}
        \label{fig:wass_MF}
    \end{subfigure}
    \begin{subfigure}{0.54\textwidth}
        \centering
        \includegraphics[width=\textwidth]{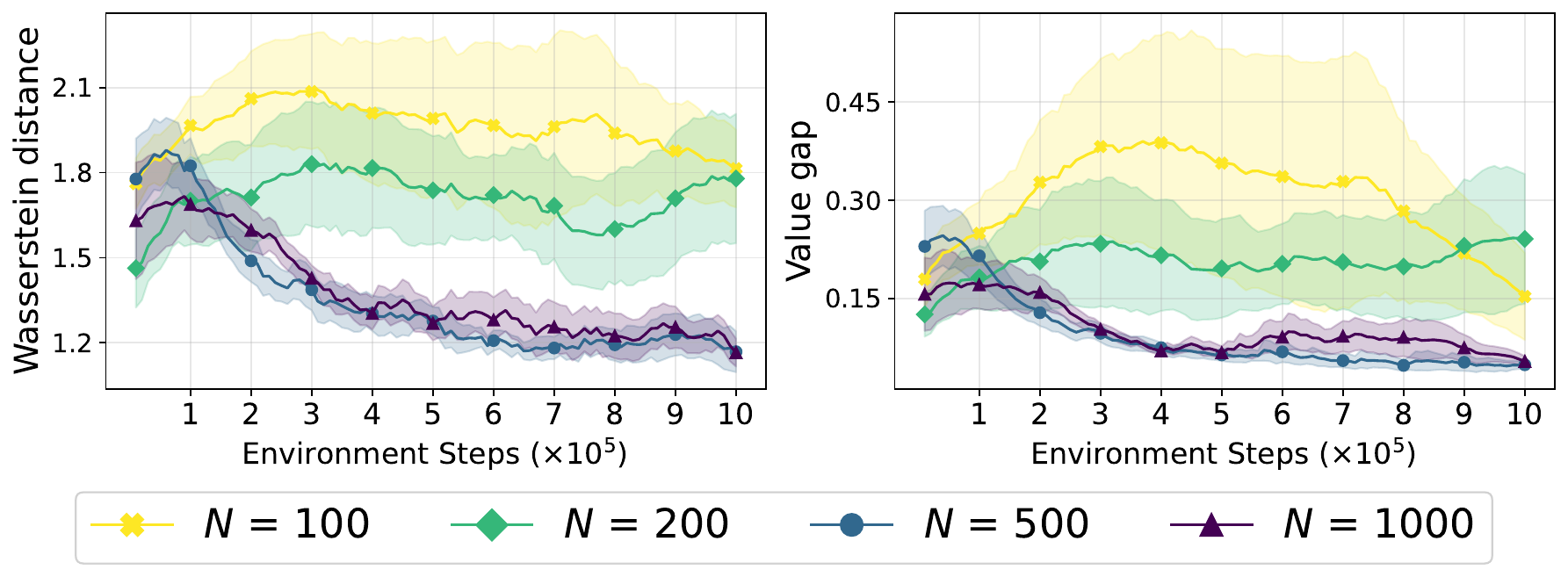}
        \caption{Wasserstein distance and value gap during training for varying population sizes $N$, with fixed $N_s = \text{1M}$. Shaded regions: $95\%$ confidence interval across 10 seeds.}
        \label{fig:wass_PG}
    \end{subfigure}
    
    \caption{Empirical validation of \autoref{thm:main}.}
    \label{fig:wass_combined}
\end{wrapfigure}
\paragraph{Results}
To evaluate separately the two sources of suboptimality identified in \autoref{thm:main}, we conduct two experiments corresponding to the two components of the bound.
In the first experiment, the novel agent is trained directly against the centrally computed mean field $(\mu^*, \pi^*)$, i.e., in the setting described in \autoref{sec:method1}.
This isolates the DQN approximation error by removing any mean field perturbation. We train the agent for $N_s \in \{50\text{k}, 100\text{k}, 500\text{k}, 1\text{M}\}$, and report $\mathcal{W}(\hat{\pi}, \pi^*)$ and $\widehat{\mathcal{E}}(\hat{\pi})$ during training for each configuration in \autoref{fig:wass_MF}.
Both metrics exhibit an overall decreasing trend across training for all buffer size configurations, confirming that the DQN agent progressively learns a policy closer to the equilibrium. As predicted by \autoref{thm:main}, larger replay buffers lead to lower asymptotic error: the agent trained with $N_s = \text{1M}$ achieves the lowest Wasserstein distance and value gap at convergence, while $N_s = \text{50k}$ converges to a noticeably higher error floor.

In the second experiment instead, the novel agent is trained against an actual population of $N$ agents playing at the SNE $(\mu^*, \pi^*)$. This introduces the mean field perturbation effect analyzed in \autoref{sec:method1}.
We fix the replay buffer size to $N_s = \text{1M}$ and vary the population size $N \in \{100, 200, 500, 1000\}$, reporting  $\mathcal{W}(\hat{\pi}, \pi^*)$ and $\widehat{\mathcal{E}}(\hat{\pi})$ during training for each configuration in \autoref{fig:wass_PG}. 
The qualitative behavior mirrors that of the first experiment: both metrics decrease during training for almost all population sizes, with larger $N$ leading to lower asymptotic error, consistently with the $O(1/N)$ mean field perturbation bound of \autoref{thm:main}. As the population grows, the influence of the new agent's non-equilibrium policy on the mean field diminishes, and the MDP it faces becomes progressively closer to the stationary equilibrium MDP of the first experiment. Notably, the agent trained with $N = 1000$ achieves performance comparable to the fixed mean field setting with $N_s = \text{1M}$, suggesting that for sufficiently large populations the mean field perturbation error becomes negligible relative to the DQN approximation error, consistently with \autoref{thm:main}. A visual comparison of the learned and equilibrium policies is reported in \autoref{appsec:add_res}.

\subsection{From scratch Equilibrium Learning in Karma DPGs}
\label{subsec:exp2}
We now evaluate the FP-DQN algorithm, studying whether a population of $N$ agents can converge to a configuration close to the centrally computed SNE starting from a uniform initialization, without any knowledge of the game model. We consider two instances of Karma economy: the instance introduced in \autoref{subsec:exp1}, whose results are reported in \autoref{appsec:add_res} for completeness, and a second instance which we use as the primary setting for the results presented in this section. The players possess an average amount of karma equal to $\bar{k}=10$, with $K=40$, and they have three different urgency levels, i.e., $\mathcal{U}=\{u_1=1,u_2=1,u_3=10\}$. The agents are typically lowly urgent $(u_1 = 1)$, and have a rare occurrence of being highly urgent $(u_3 = 10)$, following
the urgency process $\phi[u^+|u]$, with rows $(0.95, 0.05, 0)$, $(0, 0.5, 0.5)$, and $(0.95, 0.05, 0)$ for $u_1,u_2,u_3$ respectively.
For both instances, the centralized SNE $(\mu^*, \pi^*)$ is computed also in this case using the evolutionary dynamics algorithm of Elokda et al. \cite{elokda2024dynamic}, serving as the ground truth benchmark.
\paragraph{Metrics}
We track across outer iterations $i = 0, 1, \ldots, I-1$ of Algorithm \ref{alg} the
Wasserstein distances $\mathcal{W}(\bar{\pi}_i, \pi^*)$ and
$W_1(\bar{\mu}_i, \mu^*)$ between the current fictitious play iterates and the
centralized Stationary Nash Equilibrium, where $\mathcal{W}(\cdot, \cdot)$ is defined in
\eqref{eq:wasserstein_metric} and $W_1(\cdot, \cdot)$ denotes the
Wasserstein-1 distance between state distributions, as described in \autoref{subsec:exp1}.

\begin{wrapfigure}{r}{0.32\textwidth}
  \centering
  \includegraphics[width=0.3\textwidth]{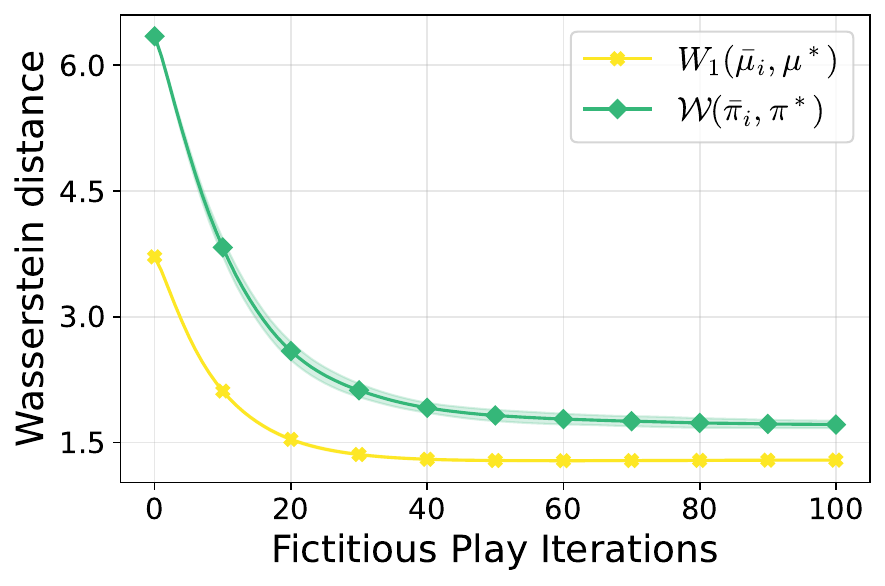}
  \caption{Wasserstein distances between the FP-DQN iterates and the centralized SNE $(\mu^*, \pi^*)$ across fictitious play iterations. Shaded regions represent 95\% confidence interval across 10 random seeds.}
  \label{fig:wass_dist_SNE}
\end{wrapfigure}

\paragraph{Results}
\autoref{fig:wass_dist_SNE} reports the Wasserstein distances between the FP-DQN
iterates $(\bar{\mu}_i, \bar{\pi}_i)$ and the centralized SNE $(\mu^*, \pi^*)$
across training. Both $W_1(\bar{\mu}_i, \mu^*)$ and
$\mathcal{W}(\bar{\pi}_i, \pi^*)$ decrease rapidly and stabilize at low values, indicating that the
algorithm converges to a configuration close to the centrally computed SNE within
a moderate number of iterations. The policy distance $\mathcal{W}(\bar{\pi}_i,
\pi^*)$ starts from a higher value than the distribution distance
$W_1(\bar{\mu}_i, \mu^*)$, which is consistent with the fact that
learning a good policy is a harder task than converging to the correct state
distribution: the distribution is shaped passively by the dynamics once the
policy improves, while the policy itself must be actively learned by the DQN
agent. We remark that
convergence to a configuration close to $(\mu^*, \pi^*)$ was not guaranteed
a priori: in the absence of uniqueness conditions, the
algorithm could in principle have converged to a different fixed point or
failed to converge altogether. Remarkably, agents with no knowledge of the game appear to rediscover the same SNE that a centralized algorithm computes with complete access to the game~primitives.

Finally, \autoref{fig:comparison} presents a direct visual comparison between the centrally computed SNE $(\mu^*, \pi^*)$ and the approximate equilibrium $(\bar{\mu}_I, \bar{\pi}_I)$ returned by FP-DQN at convergence, confirming
that the algorithm correctly captures the qualitative structure of both the equilibrium policy and distribution. 

\begin{figure}[h]
  \centering
  \includegraphics[width=0.9\textwidth]{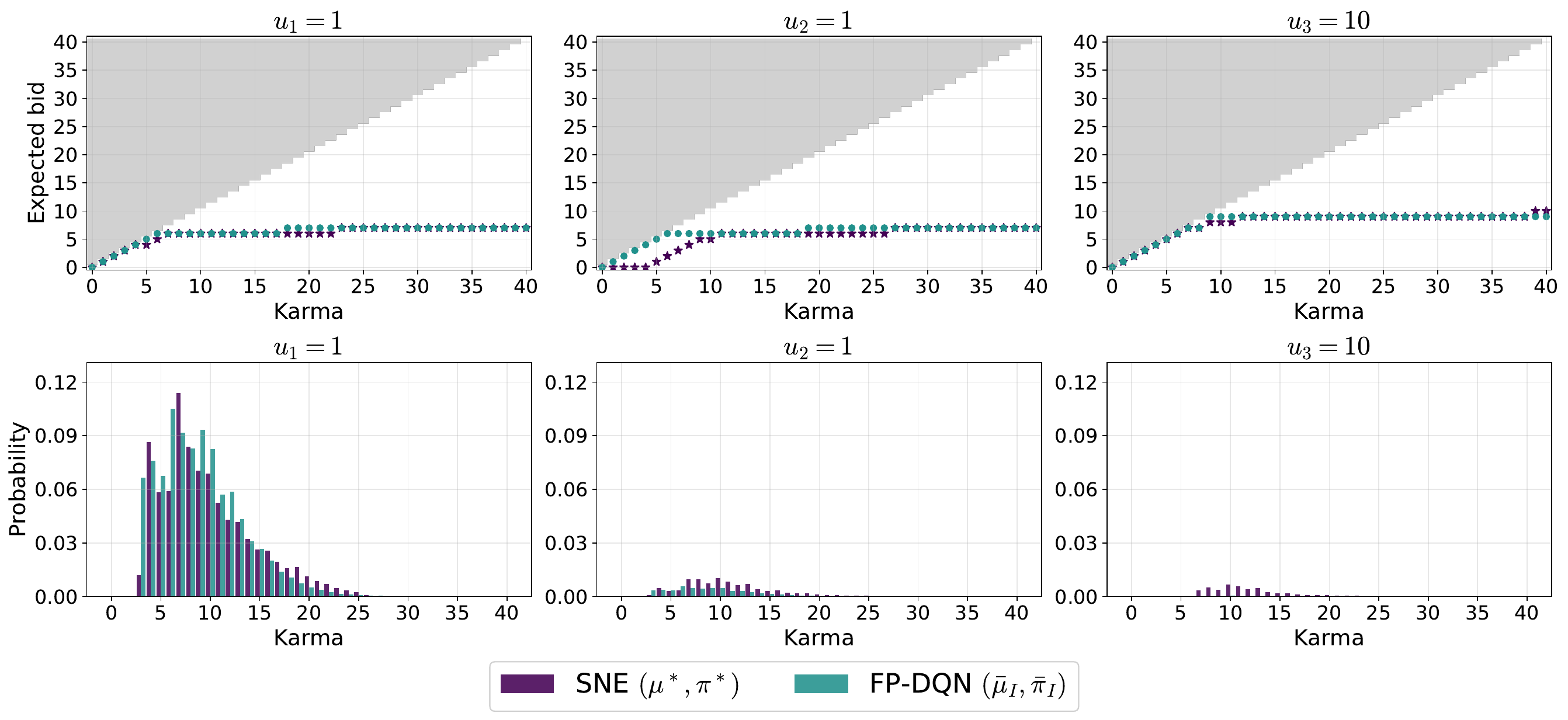}
  \caption{Comparison of the centrally computed SNE $(\mu^*, \pi^*)$ and the approximate equilibrium $(\bar{\mu}_I, \bar{\pi}_I)$ returned by FP-DQN, averaged across 10 seeds. Top: expected bid as a function of karma balance for each urgency level. Bottom: stationary karma distribution for each urgency level.}
  \label{fig:comparison}
\end{figure}
\section{Conclusions}
\label{sec:conc}
This paper studied model-free equilibrium learning in Karma DPGs, a class of GMFGs for fair resource allocation. We provided theoretical guarantees for a novel agent joining a Karma DPG at its SNE and learning via DQN, establishing a suboptimality bound on the learned policy. Additionally, we showed empirically that combining deep RL with FP and smoothed policy iteration allows agents to converge close to the centrally computed SNE without any knowledge of the game model.
\\
We acknowledge that extending convergence guarantees to the from-scratch setting remains an open problem, as well as the uniqueness of the SNE. Additionally, the explicit histogram representation of the population state distribution is feasible in the finite state-action setting of DPGs, but would become intractable in continuous state spaces; extending the framework to such settings, for instance by leveraging generative models \citep{perrin2021mean} is a promising avenue for future work.
\paragraph{Broader Impact} This work contributes to the practical deployment of a class of fair non-monetary resource allocation mechanisms. By enabling agents to learn from individual experience alone, our results bring such mechanisms closer to real-world applicability in domains where fairness and accessibility are critical. We do not foresee significant negative societal impacts from this work.

% \begin{ack}

% \end{ack}

\bibliographystyle{plain}
\bibliography{ref}
%%%%%%%%%%%%%%%%%%%%%%%%%%%%%%%%%%%%%%%%%%%%%%%%%%%%%%%%%%%%

\newpage
\appendix

\section{Appendix}
This appendix provides theoretical derivations, algorithmic details, and extended empirical results to support the main paper. It is organized as follows:
\begin{itemize}[leftmargin=1.5em]
    \item \textbf{\autoref{appsec:karma_extra}}: Karma transition probability
    \item \textbf{\autoref{appsec:proof}}: Complete derivations for Lemma \ref{lem:lipschitz_primitives} and \autoref{thm:main}
    \item \textbf{\autoref{appsec:hyperparams}}: Hyperparameters, experimental details, and hardware configuration
    \item \textbf{\autoref{appsec:add_res}}: Additional results
\end{itemize}

\newpage
\subsection{Karma transition probability}
\label{appsec:karma_extra}
As already mentioned in \autoref{sec:back}, we consider for this work the \textit{pay bid to society} scheme \cite{elokda2024self}, where each agent pays its bid if it is selected, and nothing otherwise. We begin by defining the resource competition outcome probability for the ego agent, given its bid and the full mean field:
\begin{equation}
    \gamma[o|a](\mu,\pi)=\sum_{a'}\sum_{u',k'}\mu[u',k']\pi[a'|u',k']\mathbb{P}[o|a,a'].
\end{equation}
%which leads to $\kappa[k^+|k,a](\mu,\pi)=\sum_o\gamma[o|a](\mu,\pi)\kappa[k^+|k,a,o](\mu,\pi)$
According to this scheme, the conditional payment of the ego agent is $p[a,o]=a$ if $o=\texttt{win}$, and $0$ otherwise. Therefore, the average generated surplus is 
\begin{equation}
    \bar{p}(\mu,\pi)=\sum_{u,k}\mu[u,k]\sum_a\pi[a|u,k]\gamma[o=\texttt{win}|a](\mu,\pi)a,
\end{equation}
which leads to the following redistribution rule: distribute $\lfloor \bar{p}(\mu,\pi)\rfloor$ to a fraction $f^{\text{low}}(\mu,\pi):=\lceil \bar{p}(\mu,\pi)\rceil-\bar{p}(\mu,\pi)$ of randomly selected agents, and $\lceil \bar{p}(\mu,\pi)\rceil$ to the remaining fraction $f^{\text{high}}:=1-f^{\text{low}}$. Finally, we get the following karma transition function:
\begin{align}
\label{eq:k_trans}
    &\kappa[k^+|k,a](\mu,\pi)=\sum_o\gamma[o|a](\mu,\pi)\kappa[k^+|k,a,o](\mu,\pi), \, \text{with} \\
    &\kappa[k^+|k,a,o](\mu,\pi)=\begin{cases}
        f^{\text{low}}(\mu,\pi), \quad &o=\texttt{win}, \, k^+=k-a+ \lfloor \bar{p}(\mu,\pi)\rfloor\\
        f^{\text{high}}(\mu,\pi), \quad &o=\texttt{win}, \, k^+=k-a+ \lceil \bar{p}(\mu,\pi)\rceil\\
        f^{\text{low}}(\mu,\pi), \quad &o=\texttt{lose}, \, k^+=k+ \lfloor \bar{p}(\mu,\pi)\rfloor\\
        f^{\text{high}}(\mu,\pi), \quad &o=\texttt{lose}, \, k^+=k+ \lceil \bar{p}(\mu,\pi)\rceil\\
        0, \quad &\text{otherwise}.
    \end{cases}
\end{align}

\newpage
\subsection{Complete derivations for Lemma \ref{lem:lipschitz_primitives} and \autoref{thm:main}}
\label{appsec:proof}

\begin{proof}
of Lemma \ref{lem:lipschitz_primitives}

\textit{Reward function:}
Recall that in the Karma DPG, the reward function can be written as
\begin{equation}
    r[u,k,a](\mu,\pi)=r[u,a](\mu,\pi)=u\sum_{a'}\sum_{u,k}\mu[u,k]\pi[a|u,k]\mathbb{P}[o=\texttt{win}|a,a'],
\end{equation}
The map $(\mu,\pi) \mapsto r$ is bilinear on the finite-dimensional set
$\mathcal{M} \times \Pi$, hence Lipschitz.

\textit{State transition probability:}
The state transition probability $p[x^+ \mid x,a](\mu,\pi)$ in the Karma DPG is
\begin{equation}    
p[x^+ \mid x,a](\mu,\pi)
= \phi[u^+|u]\kappa[k^+|k,a](\mu,\pi).
\end{equation}
The kernel $\kappa$, defined in \eqref{eq:k_trans},
depends on $(\mu,\pi)$ only through expectations, and is formed through finite sums and piecewise affine
operations. Hence $\kappa$, and therefore also $p$ is Lipschitz on $\mathcal{M} \times \Pi$.

Since $\mathcal{M} \times \Pi$ is compact, all Lipschitz constants can be chosen
uniformly over $(x,a,x^+)$.
\end{proof}

%%%%%%%%%%%%%%%%%%%%%%%%%%%

\begin{proof}
of \autoref{thm:main}

% The proof proceeds in three steps.

% \textit{Step 1: Mean Field Perturbation Bound:}
With $N-1$ agents at equilibrium playing policy $\pi^*$ and one new agent with
current policy $\tilde{\pi}_t$, the empirical population mean field at outer
iteration $t$ of the DQN algorithm \cite{zhang2023convergence} is:
\begin{equation}
    \tilde{s}_t = (\tilde{\mu}_t, \tilde{\pi}_t^{pop})
    = \left(
        \frac{N-1}{N} \mu^* + \frac{1}{N} \tilde{\mu}_t^{agent},\;
        \frac{N-1}{N} \pi^* + \frac{1}{N} \tilde{\pi}_t
      \right),
    \label{eq:perturbed_mf}
\end{equation}
where $\tilde{\mu}_t^{agent}$ is the stationary state distribution induced by
$\tilde{\pi}_t$ on the perturbed MDP. Therefore:
\begin{align}
    \norm{\tilde{s}_t - s^*}
    &= \frac{1}{N}\norm{\left(
        \tilde{\mu}_t^{agent} - \mu^*,\; \tilde{\pi}_t - \pi^*
       \right)}
    \leq \frac{C_0}{N},
    \label{eq:mf_bound}
\end{align}
where $C_0 = \sup_{s,s' \in \mathcal{M} \times \Pi}\norm{s-s'}$ is the diameter of
$\mathcal{M} \times \Pi$, which is finite since both spaces are probability simplices over
finite sets. This bound holds uniformly over all $t \geq 0$ and all possible
policies $\tilde{\pi}_t \in \Pi$, regardless of the learning trajectory of the new
agent.

% \paragraph{Step 2: Lipschitz Bound on the Optimal Q-Function}
% \begin{lemma}[Q-function Lipschitz Continuity]
% \label{lem:q_lipschitz}
Now, we observe that 
under Lemma~\ref{lem:lipschitz_primitives}, for any two mean fields
$s, s' \in \mathcal{M} \times \Pi$:
\begin{equation}
    \norm{Q^*[\cdot, \cdot](s) - Q^*[\cdot, \cdot](s')}_\infty
    \leq L \norm{s - s'},
    \label{eq:q_lip}
\end{equation}
where
\begin{equation}
    L := \frac{(1 - \alpha) L_r
               + \alpha R_{\max}|\mathcal{X}| L_p}{(1 - \alpha)^2}.
    \label{eq:L}
\end{equation}
% \end{lemma}

% \begin{proof}
This is because for fixed $s$, the optimal Q-function $Q^*[\cdot,\cdot](s)$ is the unique fixed
point of the Bellman operator:
\begin{equation}
    (\cT_s Q)[x, a]
    := r[x, a](s)
       + \alpha \sum_{x' \in \cX} p[x' \mid x, a](s)
         \max_{a' \in \cA[x']} Q[x', a'],
    \label{eq:bellman_op}
\end{equation}
which is a contraction with modulus $\alpha \in [0,1)$ in the $\norm{\cdot}_\infty$
norm. Denote $Q = Q^*[\cdot,\cdot](s)$ and $Q' = Q^*[\cdot,\cdot](s')$.
Since both are fixed points of their respective operators:
\begin{equation}
    Q - Q' = \cT_s Q - \cT_{s'} Q'
           = \underbrace{(\cT_s Q - \cT_s Q')}_{(A)}
             + \underbrace{(\cT_s Q' - \cT_{s'} Q')}_{(B)}.
    \label{eq:decomp}
\end{equation}

% \paragraph{Bounding term (A).}
By the contraction property of $\cT_s$:
\begin{equation}
    \norm{(A)}_\infty \leq \alpha \norm{Q - Q'}_\infty.
    \label{eq:bound_A}
\end{equation}

% \paragraph{Bounding term (B).}
Then, expanding using the definition of $\cT_s$ and $\cT_{s'}$:
\begin{align}
    \norm{(B)}_\infty
    &\leq \norm{r[\cdot,\cdot](s) - r[\cdot,\cdot](s')}_\infty
      + \alpha|\mathcal{X}| \norm{p[\cdot|\cdot,\cdot](s)
                         - p[\cdot|\cdot,\cdot](s')}_\infty
        \cdot \norm{V^*[\cdot](s')}_\infty,
    \label{eq:bound_B_expand}
\end{align}
where $V^*[x](s') = \max_{a'} Q'[x,a'](s')$ satisfies
$\norm{V^*[\cdot](s')}_\infty \leq R_{\max}/(1-\alpha)$.
Applying Lemma~\ref{lem:lipschitz_primitives}:
\begin{equation}
    \norm{(B)}_\infty
    \leq \left(L_r + \frac{\alpha R_{\max}|\mathcal{X}|}{1-\alpha} L_p\right)
         \norm{s - s'}.
    \label{eq:bound_B}
\end{equation}

% \paragraph{Combining.}
Now, taking the $\norm{\cdot}_\infty$ norm of \eqref{eq:decomp} and applying
\eqref{eq:bound_A}--\eqref{eq:bound_B}:
\begin{equation}
    \norm{Q - Q'}_\infty
    \leq \alpha \norm{Q - Q'}_\infty
         + \left(L_r + \frac{\alpha R_{\max}|\mathcal{X}|}{1-\alpha}L_p\right)
           \norm{s - s'}.
\end{equation}
Rearranging (valid since $\alpha < 1$) yields \eqref{eq:q_lip} with $L$
as in \eqref{eq:L}.
% \end{proof}

Finally, applying \eqref{eq:q_lip} with $s = \tilde{s}_t$, $s' = s^*$, and
using \eqref{eq:mf_bound}:
\begin{equation}
    \norm{Q^*[\cdot,\cdot](\tilde{s}_t) - Q^*[\cdot,\cdot](s^*)}_\infty
    \leq L \cdot \frac{C_0}{N} = \frac{C_{MF}}{N},
    \label{eq:q_bound}
\end{equation}
where $C_{MF} = L \cdot C_0$ as in \eqref{eq:C_MF}. This bound holds uniformly
in $t$ and independently of the learning trajectory.

% \paragraph{Step 3: DQN Convergence in the Perturbed MDP}
We are now ready to show that the DQN algorithm converges to a policy close to $\pi^*$.
% \paragraph{Structure of the perturbed MDP within each outer iteration.}
A key observation is that within each outer iteration $t$ of Algorithm~1 in
\cite{zhang2023convergence}, the target network weights $\boldsymbol{W}^{(t,0)}$ are fixed. Since
the new agent's policy is determined by $\boldsymbol{W}^{(t,0)}$, the mean field $\tilde{s}_t$
is also fixed within iteration $t$ (cf.\ \eqref{eq:perturbed_mf}). Therefore,
within each outer iteration $t$, the new agent faces a \emph{stationary} MDP
$\mathrm{M}_t$ with fixed mean field $\tilde{s}_t$. The MDP only changes when
moving from outer iteration $t$ to $t+1$, which is exactly the across-iteration
non-stationarity already handled in \cite{zhang2023convergence} through the time-varying
replay buffer $\mathcal{D}_t$ and the distribution shift term $C_t\in[0,1]$, defined by Zhang et al.\ as the fraction of non-optimal state-action pairs $(x,a)$ in the greedy policy with respect to $Q(\boldsymbol{W}^{(t,0)})$.
% \paragraph{Applying Zhang et al.\ to each perturbed MDP.}

Since $\mathrm{M}_t$ is a stationary MDP with finite state-action space,
Assumption 1 in \cite{zhang2023convergence} guarantees the existence of $\boldsymbol{W}^*(\tilde{s}_t)$ achieving
zero MSBE on $\mathrm{M}_t$. The results of \cite{zhang2023convergence} (Theorem~1) therefore
apply directly to $\mathrm{M}_t$ at each outer iteration $t$, giving convergence
of the DQN iterates to $Q^*[\cdot,\cdot](\tilde{s}_t)$ up to estimation error
$O(1/\sqrt{N_s})$.
% \paragraph{Handling the shift in $C_t$ across outer iterations.}

We recall that the term $C_t$ in \cite{zhang2023convergence} measures the fraction of state-action pairs
where the current greedy policy differs from the optimal policy. In our setting,
the optimal policy of $\mathrm{M}_t$ is $\pi^*_{\tilde{s}_t}$, which potentially
differs from $\pi^*_{s^*}$ because the MDP changes across iterations. We now show
that under Assumption~\ref{ass:nondegen} and  the condition $N > 2C_{MF}/\delta$,
this difference is in fact zero.

The optimal policies $\pi^*_{\tilde{s}_t}$ and $\pi^*_{s^*}$ can only disagree at
states $x$ where the perturbation is large enough to flip the argmax, i.e.\ where
there exist actions $a_1 \neq a_2$ such that:
\begin{equation}
    Q^*[x, a_1](s^*) \geq Q^*[x, a_2](s^*)
    \quad \text{but} \quad
    Q^*[x, a_1](\tilde{s}_t) < Q^*[x, a_2](\tilde{s}_t).
    \label{eq:flip_condition}
\end{equation}
For \eqref{eq:flip_condition} to hold, the Q-function gap between $a_1$ and $a_2$
under $s^*$ must satisfy:
\begin{equation}
    Q^*[x, a_1](s^*) - Q^*[x, a_2](s^*)
    < 2\norm{Q^*[\cdot,\cdot](\tilde{s}_t) - Q^*[\cdot,\cdot](s^*)}_\infty
    \leq \frac{2C_{MF}}{N},
    \label{eq:gap_condition}
\end{equation}
where the last inequality uses \eqref{eq:q_bound}. However, by
Assumption~\ref{ass:nondegen}, the minimum action gap at the equilibrium satisfies:
\begin{equation}
    Q^*[x, \pi^*(x)](s^*) - \max_{a \neq \pi^*(x)} Q^*[x, a](s^*) \geq \delta
    \quad \forall x \in \cX.
    \label{eq:gap_lower}
\end{equation}
Recall that we have $2C_{MF}/N < \delta$, so
\eqref{eq:gap_condition} cannot be satisfied for any $x \in \cX$. Therefore:
\begin{equation}
    \pi^*_{\tilde{s}_t}[x] = \pi^*_{s^*}[x] \quad \forall x \in \cX,\; \forall t,
    \label{eq:policy_agreement}
\end{equation}
and hence $C_t$ in our setting coincides exactly with $C_t$ in \cite{zhang2023convergence}.
The across-iteration non-stationarity introduces no additional error beyond what is
already accounted for in the analysis of \cite{zhang2023convergence}.
% \paragraph{Convergence to the perturbed optimal Q-function.}

By the above, the conditions of Theorem~1 in \cite{zhang2023convergence} are fully satisfied
in our setting at each outer iteration. At convergence (i.e.\ for
$T \geq \log_\rho(1/N_s)$ outer iterations, where
$\rho = \alpha + c_\varepsilon(1-\alpha) < 1$, $c_\varepsilon$ defined as in \cite{zhang2023convergence}), the learned weights $\boldsymbol{\hat{W}}$
satisfy:
\begin{equation}
    \sup_{x,a} \abs{Q(\boldsymbol{\hat{W}}; x, a) - Q^*[x, a](\tilde{s}_T)}
    \leq \frac{C_{DQN}}{\sqrt{N_s}},
    \label{eq:zhang_result}
\end{equation}
where $\tilde{s}_T$ is the mean field at the final outer iteration $T$.

% \paragraph{Transferring convergence to the equilibrium Q-function.}
Applying the triangle inequality to \eqref{eq:zhang_result} and \eqref{eq:q_bound}:
\begin{align}
    \sup_{x,a} \abs{Q(\boldsymbol{\hat{W}}; x, a) - Q^*[x, a](s^*)}
    &\leq \sup_{x,a} \abs{Q(\boldsymbol{\hat{W}}; x, a) - Q^*[x, a](\tilde{s}_T)}
          \notag \\
    &\quad + \norm{Q^*[\cdot,\cdot](\tilde{s}_T) - Q^*[\cdot,\cdot](s^*)}_\infty
          \notag \\
    &\leq \frac{C_{DQN}}{\sqrt{N_s}} + \frac{C_{MF}}{N},
    \label{eq:q_total_bound}
\end{align}

% \paragraph{From Q-function error to policy suboptimality.}
which completes the proof.

\end{proof}

\newpage
\subsection{Hyperparameters, experimental details, and hardware configuration}
\label{appsec:hyperparams}
All the models were trained on a server equipped with 2 AMD EPYC 9224 CPUs (48 cores, 96 threads), 1.5 TB of RAM, and 6 NVIDIA L40S GPUs (46 GB VRAM each). All the results presented in the paper are averaged across 10 independent
random seeds, with 95\% confidence intervals computed via the standard error of the mean.
For all the experiments, we consider the discount factor $\alpha=0.98$, and episodes composed by 1000 interaction steps.

\paragraph{DQN architecture and training}
The Q-network is a fully connected neural network with 2 hidden
layers of 64 neurons each, with ReLU activations. The input to the
network is the state representation $x = [u, k] \in \cX$, encoded as a one-hot vector for $u$, concatenated with a scalar equal to $k$ normalized by $K$. The output is a vector of $|\cA|=K+1$ values, one per action.
The network is trained using the RMSprop optimizer with
learning rate $\eta = 0.0003$. The target network is updated every 1000 gradient steps. Mini-batches of size 128 are sampled uniformly from the
replay buffer at each gradient step. The $\varepsilon$-greedy schedule decays
geometrically from $\varepsilon_0 = 1$ to $\varepsilon_{\min} = 0.01$ over 1M steps.
Similarly to Zhang et al. \cite{zhang2023convergence}, we consider the Double DQN architecture \cite{van2016deep}. Finally, action masking is applied at each step to enforce the karma feasibility constraint $a \leq k$, ensuring the agent never selects a bid exceeding its current balance.

\paragraph{FP-DQN hyperparameters.}
The FP-DQN algorithm is run for $I = 100$ outer iterations, each consisting
of $E = $100 episodes of 1000 steps each. The DQN agent is re-initialized
with fresh network weights at the start of each outer iteration. The empirical
state distribution $\mu_{i,E}$ is computed as the normalized histogram of
population states at the end of episode $E$, with population size $N = $1000.

\newpage
\subsection{Additional results}
\label{appsec:add_res}

\paragraph{Novel Learning Agent in a Karma Economy at Equilibrium}
Figures~\ref{fig:policy_0}- \ref{fig:policy_3} report the expected bid as a function of karma balance for the centrally computed SNE policy $\pi^*$ and the learned policies $\hat{\pi}$ obtained in all the tested configurations of the two experiments of \autoref{subsec:exp1}. Considering in particular \autoref{fig:policy_3}, we observe that both learned policies closely recover the qualitative structure of the equilibrium. These observations are consistent with the quantitative results of \autoref{fig:wass_combined}: the policy learned with $N_s = \text{1M}$ is slightly closer to $\pi^*$ than the one learned with $N = 1000$, reflecting the additional $O(1/N)$ mean field perturbation error present in the second experiment.
%Figures \ref{fig:policy_0}-\ref{fig:policy_2} show a similar qualitative behavior, with deviations from $\pi^*$ increasing as $N_s$ decreases or $N$ decreases.

\begin{figure}[h]
  \centering
  \includegraphics[width=\textwidth]{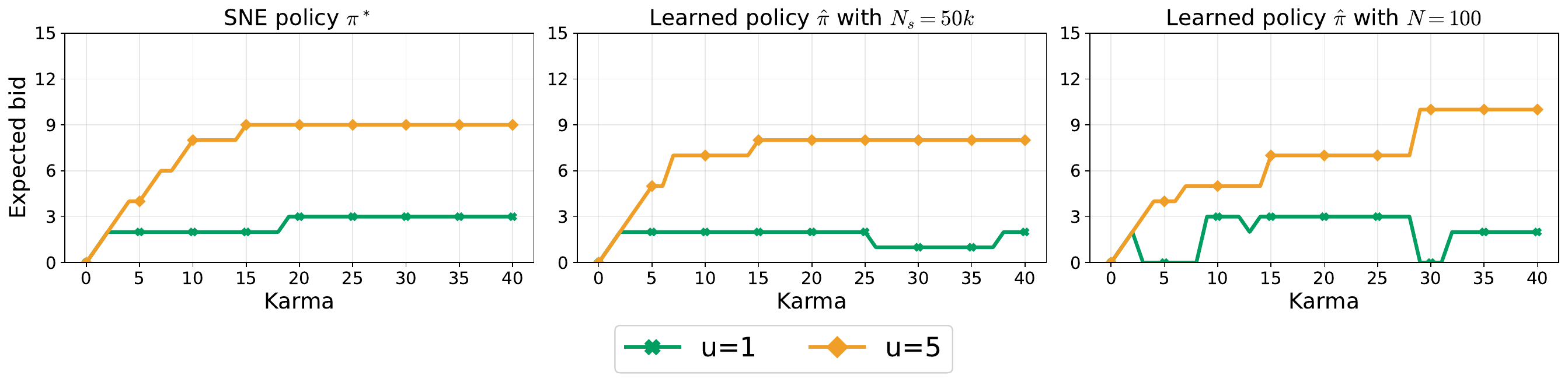}
  \caption{Comparison of the centrally computed equilibrium policy $\pi^*$ and the learned policies by setting $N_s=50k$ and $N=100$, respectively. The results are averaged across 10 seeds.}
  \label{fig:policy_0}
\end{figure}

\begin{figure}[h]
  \centering
  \includegraphics[width=\textwidth]{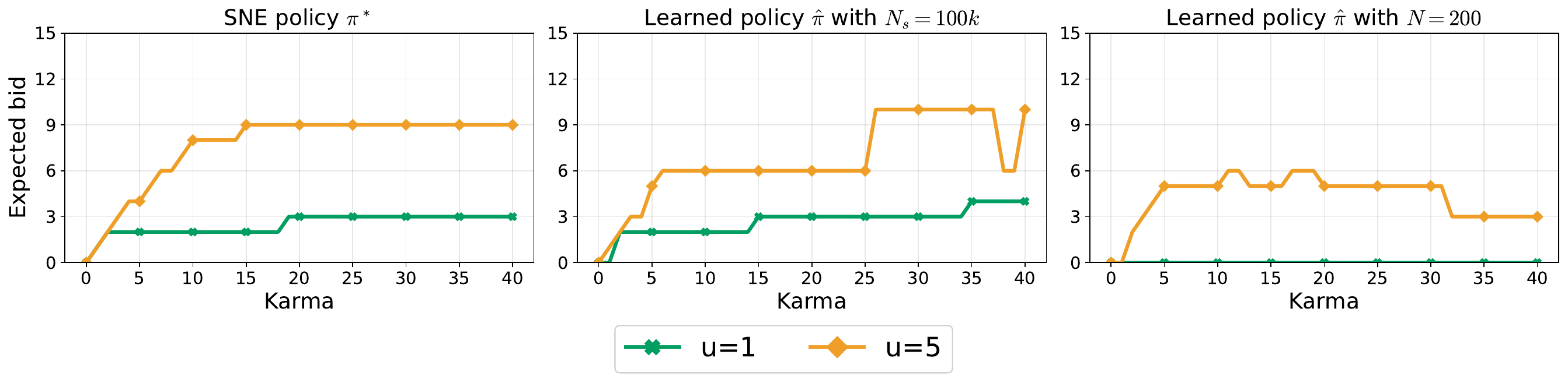}
  \caption{Comparison of the centrally computed equilibrium policy $\pi^*$ and the learned policies by setting $N_s=100k$ and $N=200$, respectively. The results are averaged across 10 seeds.}
  \label{fig:policy_1}
\end{figure}

\begin{figure}[h]
  \centering
  \includegraphics[width=\textwidth]{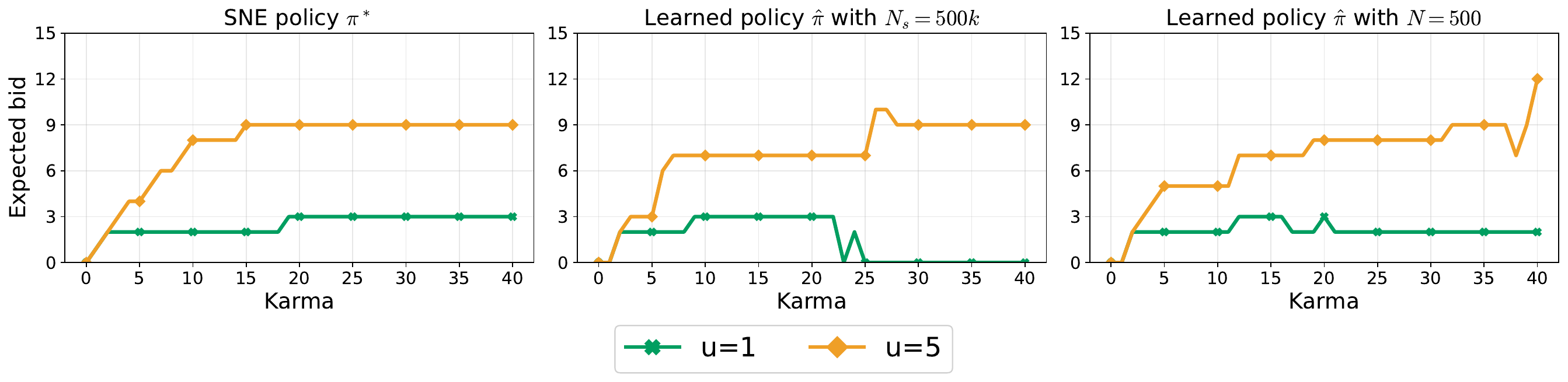}
  \caption{Comparison of the centrally computed equilibrium policy $\pi^*$ and the learned policies by setting $N_s=500k$ and $N=500$, respectively. The results are averaged across 10 seeds.}
  \label{fig:policy_2}
\end{figure}

\clearpage
\begin{figure}[h]
  \centering
  \includegraphics[width=\textwidth]{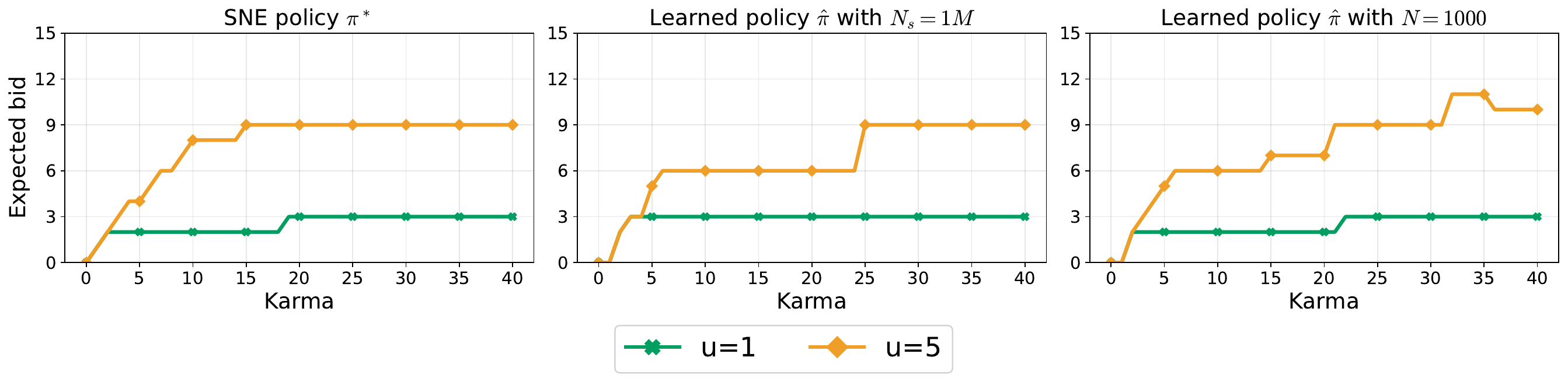}
  \caption{Comparison of the centrally computed equilibrium policy $\pi^*$ and the learned policies by setting $N_s=1M$ and $N=1000$, respectively. The results are averaged across 10 seeds.}
  \label{fig:policy_3}
\end{figure}

\begin{wrapfigure}{r}{0.42\textwidth}
    \centering
    \includegraphics[width=0.4\textwidth]{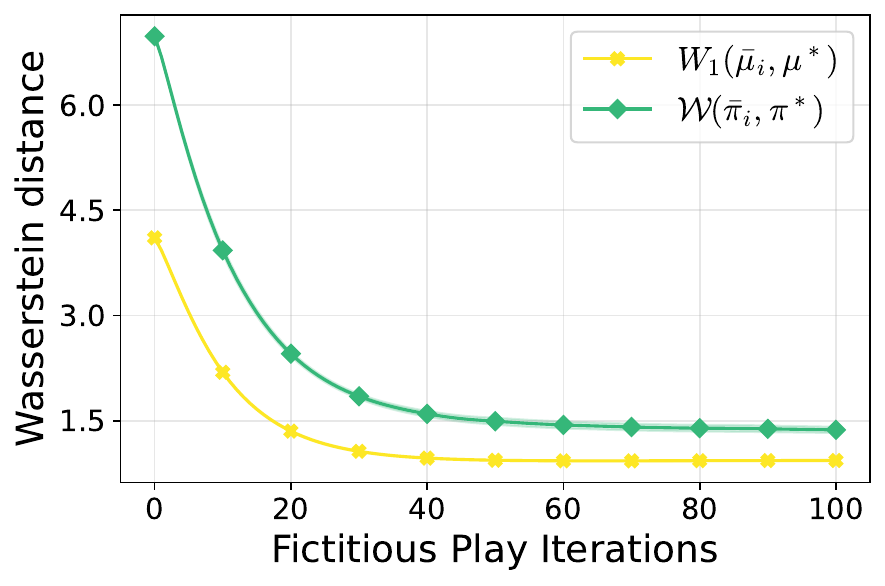}
    \caption{Wasserstein distances between the FP-DQN iterates and the centralized SNE $(\mu^*, \pi^*)$ across fictitious play iterations, for the first Karma economy instance. Shaded regions represent 95\% confidence interval across 10 random seeds.}
    \label{fig:wass_dist_SNE_instance1}
\end{wrapfigure}
\paragraph{From scratch Equilibrium Learning in Karma DPGs}
We report here the results of the FP-DQN algorithm on the first Karma economy instance introduced in \autoref{subsec:exp1}, with two urgency levels $\mathcal{U} = \{u_1 = 1, u_2 = 5\}$ and $\phi[u_2|u_1] = \phi[u_1|u_2] = 0.5$. \autoref{fig:wass_dist_SNE_instance1} reports the Wasserstein distances between the FP-DQN iterates and the centralized SNE across fictitious play iterations, and \autoref{fig:comparison_instance1} shows the visual comparison between the centrally computed SNE and the approximate equilibrium returned by FP-DQN at convergence. The results are qualitatively consistent with those reported in \autoref{subsec:exp2} for the second instance: both metrics decrease and stabilize at low values within a moderate number of outer iterations, and the learned policy and distribution closely match the structure of the equilibrium. This confirms that the convergence behavior of FP-DQN is robust across different Karma economy configurations and is not specific to a single parameter regime.

\begin{figure}[h]
    \centering
    \includegraphics[width=0.85\textwidth]{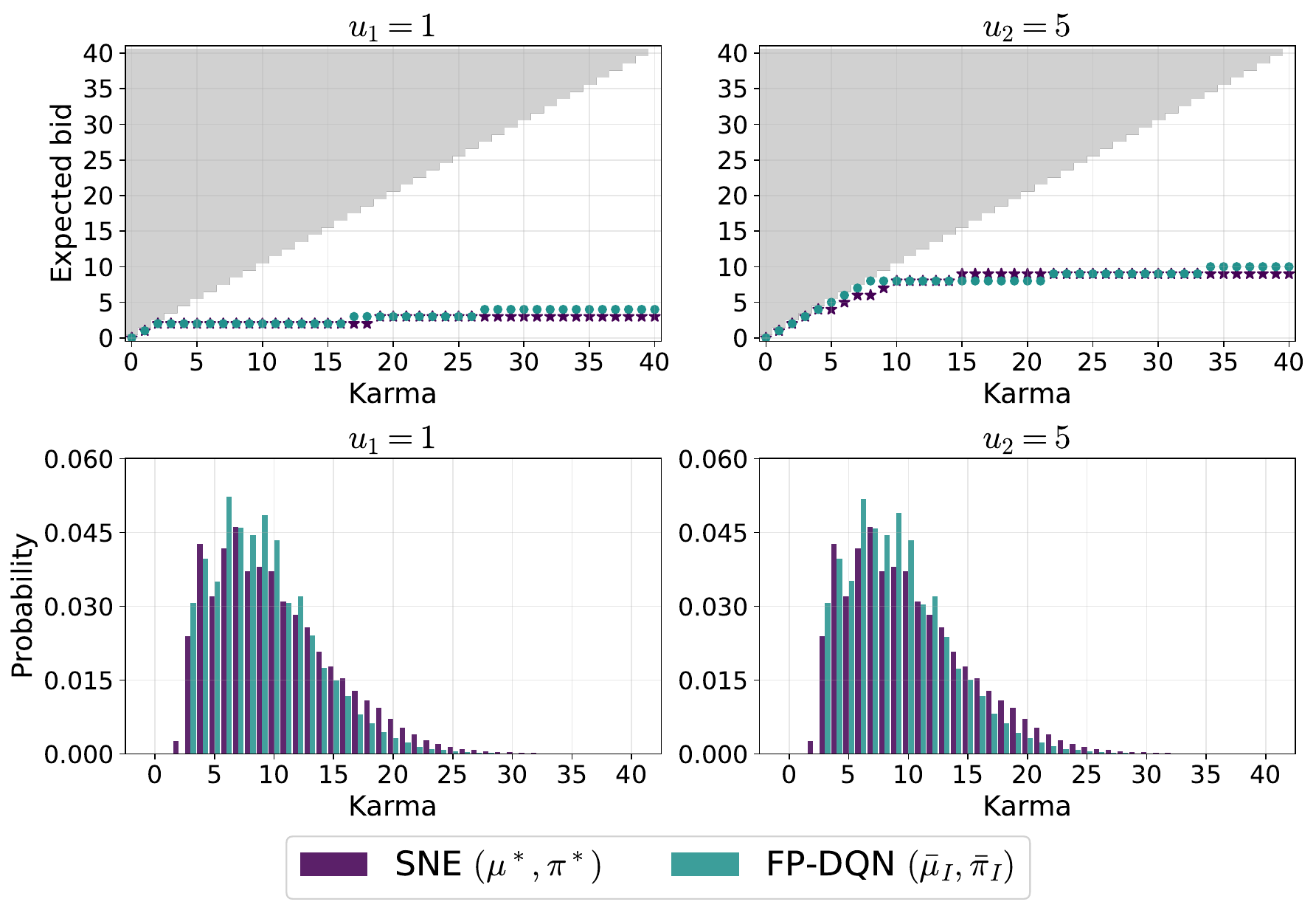}
    \caption{Comparison of the centrally computed SNE $(\mu^*, \pi^*)$ and the approximate equilibrium $(\bar{\mu}_I, \bar{\pi}_I)$ returned by FP-DQN at convergence, averaged across 10 seeds, for the first Karma economy instance. Top: expected bid as a function of karma balance for each urgency level. Bottom: stationary karma distribution for each urgency level.}
    \label{fig:comparison_instance1}
\end{figure}
%%%%%%%%%%%%%%%%%%%%%%%%%%%%%%%%%%%%%%%%%%%%%%%%%%%%%%%%%%%%

\clearpage
\section*{NeurIPS Paper Checklist}

\begin{enumerate}

\item {\bf Claims}
    \item[] Question: Do the main claims made in the abstract and introduction accurately reflect the paper's contributions and scope?
    \item[] Answer: \answerYes{} % Replace by \answerYes{}, \answerNo{}, or \answerNA{}.
    \item[] Justification: The abstract and introduction accurately reflect the paper's contributions and scope: theoretical guarantees are explicitly scoped to the single-agent setting, and the from-scratch convergence result is consistently presented as empirical.
    \item[] Guidelines:
    \begin{itemize}
        \item The answer \answerNA{} means that the abstract and introduction do not include the claims made in the paper.
        \item The abstract and/or introduction should clearly state the claims made, including the contributions made in the paper and important assumptions and limitations. A \answerNo{} or \answerNA{} answer to this question will not be perceived well by the reviewers. 
        \item The claims made should match theoretical and experimental results, and reflect how much the results can be expected to generalize to other settings. 
        \item It is fine to include aspirational goals as motivation as long as it is clear that these goals are not attained by the paper. 
    \end{itemize}

\item {\bf Limitations}
    \item[] Question: Does the paper discuss the limitations of the work performed by the authors?
    \item[] Answer: \answerYes{} % Replace by \answerYes{}, \answerNo{}, or \answerNA{}.
    \item[] Justification:  Both in the conclusions and in the sections presenting the new methods, limitations are highlighted, including the restriction of theoretical guarantees to the single-agent setting, and the scalability of the framework to continuous state spaces.
    \item[] Guidelines:
    \begin{itemize}
        \item The answer \answerNA{} means that the paper has no limitation while the answer \answerNo{} means that the paper has limitations, but those are not discussed in the paper. 
        \item The authors are encouraged to create a separate ``Limitations'' section in their paper.
        \item The paper should point out any strong assumptions and how robust the results are to violations of these assumptions (e.g., independence assumptions, noiseless settings, model well-specification, asymptotic approximations only holding locally). The authors should reflect on how these assumptions might be violated in practice and what the implications would be.
        \item The authors should reflect on the scope of the claims made, e.g., if the approach was only tested on a few datasets or with a few runs. In general, empirical results often depend on implicit assumptions, which should be articulated.
        \item The authors should reflect on the factors that influence the performance of the approach. For example, a facial recognition algorithm may perform poorly when image resolution is low or images are taken in low lighting. Or a speech-to-text system might not be used reliably to provide closed captions for online lectures because it fails to handle technical jargon.
        \item The authors should discuss the computational efficiency of the proposed algorithms and how they scale with dataset size.
        \item If applicable, the authors should discuss possible limitations of their approach to address problems of privacy and fairness.
        \item While the authors might fear that complete honesty about limitations might be used by reviewers as grounds for rejection, a worse outcome might be that reviewers discover limitations that aren't acknowledged in the paper. The authors should use their best judgment and recognize that individual actions in favor of transparency play an important role in developing norms that preserve the integrity of the community. Reviewers will be specifically instructed to not penalize honesty concerning limitations.
    \end{itemize}

\item {\bf Theory assumptions and proofs}
    \item[] Question: For each theoretical result, does the paper provide the full set of assumptions and a complete (and correct) proof?
    \item[] Answer: \answerYes{} % Replace by \answerYes{}, \answerNo{}, or \answerNA{}.
    \item[] Justification:  In the appendix, we provide formal proofs for each theoretical result in the main text, with numbered equations.
    \item[] Guidelines:
    \begin{itemize}
        \item The answer \answerNA{} means that the paper does not include theoretical results. 
        \item All the theorems, formulas, and proofs in the paper should be numbered and cross-referenced.
        \item All assumptions should be clearly stated or referenced in the statement of any theorems.
        \item The proofs can either appear in the main paper or the supplemental material, but if they appear in the supplemental material, the authors are encouraged to provide a short proof sketch to provide intuition. 
        \item Inversely, any informal proof provided in the core of the paper should be complemented by formal proofs provided in appendix or supplemental material.
        \item Theorems and Lemmas that the proof relies upon should be properly referenced. 
    \end{itemize}

    \item {\bf Experimental result reproducibility}
    \item[] Question: Does the paper fully disclose all the information needed to reproduce the main experimental results of the paper to the extent that it affects the main claims and/or conclusions of the paper (regardless of whether the code and data are provided or not)?
    \item[] Answer: \answerYes{} % Replace by \answerYes{}, \answerNo{}, or \answerNA{}.
    \item[] Justification: Both in the main text and in the appendix, we provide details on the setup, hyperparameters, and hardware being used. Furthermore, for easy better interpretation, we include in the main text a pseudocode of the from-scratch learning algorithm. Finally, we provide the code as supplementary material.
    \item[] Guidelines:
    \begin{itemize}
        \item The answer \answerNA{} means that the paper does not include experiments.
        \item If the paper includes experiments, a \answerNo{} answer to this question will not be perceived well by the reviewers: Making the paper reproducible is important, regardless of whether the code and data are provided or not.
        \item If the contribution is a dataset and\slash or model, the authors should describe the steps taken to make their results reproducible or verifiable. 
        \item Depending on the contribution, reproducibility can be accomplished in various ways. For example, if the contribution is a novel architecture, describing the architecture fully might suffice, or if the contribution is a specific model and empirical evaluation, it may be necessary to either make it possible for others to replicate the model with the same dataset, or provide access to the model. In general. releasing code and data is often one good way to accomplish this, but reproducibility can also be provided via detailed instructions for how to replicate the results, access to a hosted model (e.g., in the case of a large language model), releasing of a model checkpoint, or other means that are appropriate to the research performed.
        \item While NeurIPS does not require releasing code, the conference does require all submissions to provide some reasonable avenue for reproducibility, which may depend on the nature of the contribution. For example
        \begin{enumerate}
            \item If the contribution is primarily a new algorithm, the paper should make it clear how to reproduce that algorithm.
            \item If the contribution is primarily a new model architecture, the paper should describe the architecture clearly and fully.
            \item If the contribution is a new model (e.g., a large language model), then there should either be a way to access this model for reproducing the results or a way to reproduce the model (e.g., with an open-source dataset or instructions for how to construct the dataset).
            \item We recognize that reproducibility may be tricky in some cases, in which case authors are welcome to describe the particular way they provide for reproducibility. In the case of closed-source models, it may be that access to the model is limited in some way (e.g., to registered users), but it should be possible for other researchers to have some path to reproducing or verifying the results.
        \end{enumerate}
    \end{itemize}

\item {\bf Open access to data and code}
    \item[] Question: Does the paper provide open access to the data and code, with sufficient instructions to faithfully reproduce the main experimental results, as described in supplemental material?
    \item[] Answer: \answerYes{} % Replace by \answerYes{}, \answerNo{}, or \answerNA{}.
    \item[] Justification: The code for the proposed methods is included in the supplementary material and publicly available. All data can be generated during training.
    \item[] Guidelines:
    \begin{itemize}
        \item The answer \answerNA{} means that paper does not include experiments requiring code.
        \item Please see the NeurIPS code and data submission guidelines (\url{https://neurips.cc/public/guides/CodeSubmissionPolicy}) for more details.
        \item While we encourage the release of code and data, we understand that this might not be possible, so \answerNo{} is an acceptable answer. Papers cannot be rejected simply for not including code, unless this is central to the contribution (e.g., for a new open-source benchmark).
        \item The instructions should contain the exact command and environment needed to run to reproduce the results. See the NeurIPS code and data submission guidelines (\url{https://neurips.cc/public/guides/CodeSubmissionPolicy}) for more details.
        \item The authors should provide instructions on data access and preparation, including how to access the raw data, preprocessed data, intermediate data, and generated data, etc.
        \item The authors should provide scripts to reproduce all experimental results for the new proposed method and baselines. If only a subset of experiments are reproducible, they should state which ones are omitted from the script and why.
        \item At submission time, to preserve anonymity, the authors should release anonymized versions (if applicable).
        \item Providing as much information as possible in supplemental material (appended to the paper) is recommended, but including URLs to data and code is permitted.
    \end{itemize}

\item {\bf Experimental setting/details}
    \item[] Question: Does the paper specify all the training and test details (e.g., data splits, hyperparameters, how they were chosen, type of optimizer) necessary to understand the results?
    \item[] Answer: \answerYes{} % Replace by \answerYes{}, \answerNo{}, or \answerNA{}.
    \item[] Justification: We list important hyperparameters, neural network architectures, and other training details in the appendix
    \item[] Guidelines:
    \begin{itemize}
        \item The answer \answerNA{} means that the paper does not include experiments.
        \item The experimental setting should be presented in the core of the paper to a level of detail that is necessary to appreciate the results and make sense of them.
        \item The full details can be provided either with the code, in appendix, or as supplemental material.
    \end{itemize}

\item {\bf Experiment statistical significance}
    \item[] Question: Does the paper report error bars suitably and correctly defined or other appropriate information about the statistical significance of the experiments?
    \item[] Answer: \answerYes{} % Replace by \answerYes{}, \answerNo{}, or \answerNA{}.
    \item[] Justification: We assume normally distributed errors. All results are averaged over 10 runs and reported with 95\% confidence interval.
    \item[] Guidelines:
    \begin{itemize}
        \item The answer \answerNA{} means that the paper does not include experiments.
        \item The authors should answer \answerYes{} if the results are accompanied by error bars, confidence intervals, or statistical significance tests, at least for the experiments that support the main claims of the paper.
        \item The factors of variability that the error bars are capturing should be clearly stated (for example, train/test split, initialization, random drawing of some parameter, or overall run with given experimental conditions).
        \item The method for calculating the error bars should be explained (closed form formula, call to a library function, bootstrap, etc.)
        \item The assumptions made should be given (e.g., Normally distributed errors).
        \item It should be clear whether the error bar is the standard deviation or the standard error of the mean.
        \item It is OK to report 1-sigma error bars, but one should state it. The authors should preferably report a 2-sigma error bar than state that they have a 96\% CI, if the hypothesis of Normality of errors is not verified.
        \item For asymmetric distributions, the authors should be careful not to show in tables or figures symmetric error bars that would yield results that are out of range (e.g., negative error rates).
        \item If error bars are reported in tables or plots, the authors should explain in the text how they were calculated and reference the corresponding figures or tables in the text.
    \end{itemize}

\item {\bf Experiments compute resources}
    \item[] Question: For each experiment, does the paper provide sufficient information on the computer resources (type of compute workers, memory, time of execution) needed to reproduce the experiments?
    \item[] Answer: \answerYes{} % Replace by \answerYes{}, \answerNo{}, or \answerNA{}.
    \item[] Justification: In the appendix, the hardware specifications and usages are specified.
    \item[] Guidelines:
    \begin{itemize}
        \item The answer \answerNA{} means that the paper does not include experiments.
        \item The paper should indicate the type of compute workers CPU or GPU, internal cluster, or cloud provider, including relevant memory and storage.
        \item The paper should provide the amount of compute required for each of the individual experimental runs as well as estimate the total compute. 
        \item The paper should disclose whether the full research project required more compute than the experiments reported in the paper (e.g., preliminary or failed experiments that didn't make it into the paper). 
    \end{itemize}
    
\item {\bf Code of ethics}
    \item[] Question: Does the research conducted in the paper conform, in every respect, with the NeurIPS Code of Ethics \url{https://neurips.cc/public/EthicsGuidelines}?
    \item[] Answer: \answerYes{} % Replace by \answerYes{}, \answerNo{}, or \answerNA{}.
    \item[] Justification: Our paper conforms to the NeurIPS Code of Ethics.
    \item[] Guidelines:
    \begin{itemize}
        \item The answer \answerNA{} means that the authors have not reviewed the NeurIPS Code of Ethics.
        \item If the authors answer \answerNo, they should explain the special circumstances that require a deviation from the Code of Ethics.
        \item The authors should make sure to preserve anonymity (e.g., if there is a special consideration due to laws or regulations in their jurisdiction).
    \end{itemize}

\item {\bf Broader impacts}
    \item[] Question: Does the paper discuss both potential positive societal impacts and negative societal impacts of the work performed?
    \item[] Answer: \answerYes{} % Replace by \answerYes{}, \answerNo{}, or \answerNA{}.
    \item[] Justification: We discuss them at the end of the main paper.
    \item[] Guidelines:
    \begin{itemize}
        \item The answer \answerNA{} means that there is no societal impact of the work performed.
        \item If the authors answer \answerNA{} or \answerNo, they should explain why their work has no societal impact or why the paper does not address societal impact.
        \item Examples of negative societal impacts include potential malicious or unintended uses (e.g., disinformation, generating fake profiles, surveillance), fairness considerations (e.g., deployment of technologies that could make decisions that unfairly impact specific groups), privacy considerations, and security considerations.
        \item The conference expects that many papers will be foundational research and not tied to particular applications, let alone deployments. However, if there is a direct path to any negative applications, the authors should point it out. For example, it is legitimate to point out that an improvement in the quality of generative models could be used to generate Deepfakes for disinformation. On the other hand, it is not needed to point out that a generic algorithm for optimizing neural networks could enable people to train models that generate Deepfakes faster.
        \item The authors should consider possible harms that could arise when the technology is being used as intended and functioning correctly, harms that could arise when the technology is being used as intended but gives incorrect results, and harms following from (intentional or unintentional) misuse of the technology.
        \item If there are negative societal impacts, the authors could also discuss possible mitigation strategies (e.g., gated release of models, providing defenses in addition to attacks, mechanisms for monitoring misuse, mechanisms to monitor how a system learns from feedback over time, improving the efficiency and accessibility of ML).
    \end{itemize}
    
\item {\bf Safeguards}
    \item[] Question: Does the paper describe safeguards that have been put in place for responsible release of data or models that have a high risk for misuse (e.g., pre-trained language models, image generators, or scraped datasets)?
    \item[] Answer: \answerNA{} % Replace by \answerYes{}, \answerNo{}, or \answerNA{}.
    \item[] Justification: We find that this paper poses no such risk.
    \item[] Guidelines:
    \begin{itemize}
        \item The answer \answerNA{} means that the paper poses no such risks.
        \item Released models that have a high risk for misuse or dual-use should be released with necessary safeguards to allow for controlled use of the model, for example by requiring that users adhere to usage guidelines or restrictions to access the model or implementing safety filters. 
        \item Datasets that have been scraped from the Internet could pose safety risks. The authors should describe how they avoided releasing unsafe images.
        \item We recognize that providing effective safeguards is challenging, and many papers do not require this, but we encourage authors to take this into account and make a best faith effort.
    \end{itemize}

\item {\bf Licenses for existing assets}
    \item[] Question: Are the creators or original owners of assets (e.g., code, data, models), used in the paper, properly credited and are the license and terms of use explicitly mentioned and properly respected?
    \item[] Answer: \answerYes{} % Replace by \answerYes{}, \answerNo{}, or \answerNA{}.
    \item[] Justification: We implemented most of our algorithms from scratch. All our results are generated during training.
    \item[] Guidelines:
    \begin{itemize}
        \item The answer \answerNA{} means that the paper does not use existing assets.
        \item The authors should cite the original paper that produced the code package or dataset.
        \item The authors should state which version of the asset is used and, if possible, include a URL.
        \item The name of the license (e.g., CC-BY 4.0) should be included for each asset.
        \item For scraped data from a particular source (e.g., website), the copyright and terms of service of that source should be provided.
        \item If assets are released, the license, copyright information, and terms of use in the package should be provided. For popular datasets, \url{paperswithcode.com/datasets} has curated licenses for some datasets. Their licensing guide can help determine the license of a dataset.
        \item For existing datasets that are re-packaged, both the original license and the license of the derived asset (if it has changed) should be provided.
        \item If this information is not available online, the authors are encouraged to reach out to the asset's creators.
    \end{itemize}

\item {\bf New assets}
    \item[] Question: Are new assets introduced in the paper well documented and is the documentation provided alongside the assets?
    \item[] Answer: \answerYes{} % Replace by \answerYes{}, \answerNo{}, or \answerNA{}.
    \item[] Justification: We provide our code and instructions to run the experiments.
    \item[] Guidelines:
    \begin{itemize}
        \item The answer \answerNA{} means that the paper does not release new assets.
        \item Researchers should communicate the details of the dataset\slash code\slash model as part of their submissions via structured templates. This includes details about training, license, limitations, etc. 
        \item The paper should discuss whether and how consent was obtained from people whose asset is used.
        \item At submission time, remember to anonymize your assets (if applicable). You can either create an anonymized URL or include an anonymized zip file.
    \end{itemize}

\item {\bf Crowdsourcing and research with human subjects}
    \item[] Question: For crowdsourcing experiments and research with human subjects, does the paper include the full text of instructions given to participants and screenshots, if applicable, as well as details about compensation (if any)? 
    \item[] Answer: \answerNA{} % Replace by \answerYes{}, \answerNo{}, or \answerNA{}.
    \item[] Justification: The paper does not involve crowdsourcing nor research with human subjects.
    \item[] Guidelines:
    \begin{itemize}
        \item The answer \answerNA{} means that the paper does not involve crowdsourcing nor research with human subjects.
        \item Including this information in the supplemental material is fine, but if the main contribution of the paper involves human subjects, then as much detail as possible should be included in the main paper. 
        \item According to the NeurIPS Code of Ethics, workers involved in data collection, curation, or other labor should be paid at least the minimum wage in the country of the data collector. 
    \end{itemize}

\item {\bf Institutional review board (IRB) approvals or equivalent for research with human subjects}
    \item[] Question: Does the paper describe potential risks incurred by study participants, whether such risks were disclosed to the subjects, and whether Institutional Review Board (IRB) approvals (or an equivalent approval/review based on the requirements of your country or institution) were obtained?
    \item[] Answer: \answerNA{} % Replace by \answerYes{}, \answerNo{}, or \answerNA{}.
    \item[] Justification: The paper does not involve crowdsourcing nor research with human subjects.
    \item[] Guidelines:
    \begin{itemize}
        \item The answer \answerNA{} means that the paper does not involve crowdsourcing nor research with human subjects.
        \item Depending on the country in which research is conducted, IRB approval (or equivalent) may be required for any human subjects research. If you obtained IRB approval, you should clearly state this in the paper. 
        \item We recognize that the procedures for this may vary significantly between institutions and locations, and we expect authors to adhere to the NeurIPS Code of Ethics and the guidelines for their institution. 
        \item For initial submissions, do not include any information that would break anonymity (if applicable), such as the institution conducting the review.
    \end{itemize}

\item {\bf Declaration of LLM usage}
    \item[] Question: Does the paper describe the usage of LLMs if it is an important, original, or non-standard component of the core methods in this research? Note that if the LLM is used only for writing, editing, or formatting purposes and does \emph{not} impact the core methodology, scientific rigor, or originality of the research, declaration is not required.
    %this research? 
    \item[] Answer: \answerNA{} % Replace by \answerYes{}, \answerNo{}, or \answerNA{}.
    \item[] Justification: LLMs have been used only for writing, editing, and formatting purposes.
    \item[] Guidelines:
    \begin{itemize}
        \item The answer \answerNA{} means that the core method development in this research does not involve LLMs as any important, original, or non-standard components.
        \item Please refer to our LLM policy in the NeurIPS handbook for what should or should not be described.
    \end{itemize}

\end{enumerate}

\end{document}